\newcounter{thm} 
\newtheorem{lemmodq}[thm]{Lemma}
\newtheorem{lemmu}[thm]{Lemma}
\newtheorem{lemholevo}[thm]{Lemma}
\begin{document}

\title{Gaussian capacity of the quantum bosonic channel with additive correlated Gaussian noise}
\date{\today}
\author{Joachim Sch\"afer} 
\author{Evgueni Karpov} 
\author{Nicolas J. Cerf} 
\affiliation{QuIC, Ecole Polytechnique, CP 165, Universit\'e Libre de Bruxelles, 1050 Brussels, Belgium}
	
\begin{abstract}
We present an algorithm for calculation of the Gaussian classical capacity of a quantum bosonic memory channel with additive Gaussian noise. The algorithm, restricted to Gaussian input states, is applicable to all channels with noise correlations obeying certain conditions and works in the full input energy domain, beyond previous treatments of this problem. As an illustration, we study the optimal input states and capacity of a quantum memory channel with Gauss-Markov noise [J. Sch\"{a}fer, Phys. Rev. A {\textbf{80}}, 062313 (2009)]. We evaluate the enhancement of the transmission rate when using these optimal entangled input states by comparison with a product coherent-state encoding and find out that such a simple coherent-state encoding achieves not less than 90\% of the capacity.
\end{abstract}

\maketitle

\section{introduction}
A central problem of information theory is to derive the capacity of communication channels, which is the maximal information transmission rate for a given available energy. For quantum channels, which may transmit quantum or classical information, one can define the quantum or classical capacity. The present paper is focused on the latter. Just as in classical information theory, the question of whether or not the classical capacity is additive is of central importance. If the capacity is superadditive, then the asymptotic transmission rate can be higher than the maximal rate achievable for a single use of the channel. Although some quantum channels were proven to be additive \cite{MY04,K03,GGLMSY04}, a counterexample was exhibited, proving that this does not hold in general \cite{H09}. The situation becomes rather different for channels with memory, as then the subsequent uses of the channel get linked. In particular, if the memory is modeled by noise correlations between the uses of the channel, it has been shown that the transmission rate may be enhanced by using entangled input states \cite{KW05,MP02,MPV04,KDC06,KM06,D07,BDM05,CCMR05,GM05,PZM08,LPM09,SDKC09,PLM09,LGM10,LM10}. 

A particular interest has been devoted to Gaussian bosonic channels as they model common physical links such as the optical transmission via free space or optical fibers. An overview of these bosonic channels can be found in \cite{CD94}. Among them the most studied are the additive noise and lossy Gaussian channels (see e.g. \cite{HSH99,GGLMSY04,H05} and Refs. therein). For these channels, it was also shown that correlated noise may lead to superadditivity as the transmission rate may be enhanced by input states with some degree of entanglement \cite{CCMR05,GM05,PZM08,LPM09,SDKC09,PLM09,LGM10,LM10}. We remark that all these studies were restricted to the set of Gaussian input states, so that the derived quantity can be viewed as a ``Gaussian capacity''. However, if the conjecture that Gaussian states minimize the output entropy of Gaussian channels could be proven, then this quantity would turn into the actual capacity. For a particular case, i.e., the memoryless lossy Gaussian channel with vacuum noise, this conjecture was proven to hold \cite{GGLMSY04}. In the present paper, we also restrict ourselves to the set of Gaussian input states, that is, we investigate the Gaussian capacity.

In a recent work, we have evaluated the Gaussian capacity of a Gaussian quantum channel with additive Markov correlated noise, restricted to a certain input energy domain \cite{SDKC09}. Correlated Gaussian noise appears, for instance, in the models of downlink communications between satellites and terrestrial stations \cite{LFH81} and a simple description of such correlations can be provided by an underlying Markov process. We found a \emph{quantum water-filling} solution to the Gaussian capacity, similar to the classical water-filling that appears when considering parallel classical Gaussian channels \cite{Cover}. This similarity is very surprising, as we must take into account that a part of the input energy, in addition to being spent on classical modulation, is spent on the preparation of quantum information carriers (e.g., squeezed states), something which has no classical analog. The notion of quantum water-filling appeared for the first time in the discussion of the capacity of a memoryless phase-dependent Gaussian channel \cite{HSH99}, although there the quantum information carriers were considered part of the channel and only the energy cost of classical modulation was considered, thus making this solution a straightforward analog to the classical one.

In this paper, we present a solution to the Gaussian capacity of the Gaussian additive noise channel in the full input energy domain, where noise correlations are given by stationary (shift-invariant) Gauss processes. We show that the method is applicable even to a larger class of noises. We present an algorithm for a numerical solution of the arising optimization problem based on the method of Lagrange multipliers. Although the algorithm was derived independently, the validity of this method is based on arguments which are essentially equivalent to those presented in \cite{PLM09} for a lossy Gaussian channel where the quantum water-filling solution was also obtained. Using this method, we analyze the Gaussian capacity and the associated optimal input states and encoding as a function of the noise parameters for the special case of Gauss-Markov noise, including the limiting case of maximal noise correlations. In addition, we evaluate the gain from using the optimal input states, which are entangled and therefore may be complicated to produce, with respect to easily generated coherent product states. 

The paper is organized as follows. First, we introduce the notion of classical capacity and Gaussian quantum channels in Sec. \ref{sec:classcap}. Then we discuss the solution to the capacity for the one-mode case in Sec. \ref{sec:optproblem} and for an arbitrary number of modes in Sec. \ref{sec:solarb}. Finally, in Sec. \ref{sec:gaussmarkov} we analyze the capacity, the optimal states, and the gain for Gauss-Markov noise in the full range of the correlation strength. The conclusions are provided in Sec. \ref{sec:conclusions} and several mathematical proofs and definitions are provided in the Appendices.

\section{Classical capacity of quantum Gaussian channels}\label{sec:classcap}

In order to transmit classical information via a quantum channel, one defines an alphabet with letters associated with quantum states $\rho^\mathrm{in}_i$. The quantum channel $T$ is a completely positive, trace-preserving linear map acting on the input ``letter'' states: 
\begin{equation}\label{eq:channel}
  \rho^\mathrm{out}_i=T[\rho^\mathrm{in}_i],
\end{equation} 
resulting in output states $\rho^\mathrm{out}_i$. On average the ``letters'' $\rho^\mathrm{in}_i$ appear in the transmitted messages with \emph{a priori} probabilities $p_i$ so that the overall modulated input state is $\rho^\mathrm{in} = \sum_i p_i \rho^\mathrm{in}_i$. By linearity, the action of $T$ on the overall modulated input reads $T[\rho^\mathrm{in}] = \sum_i{p_i \rho^\mathrm{out}_i} \equiv \overline{\rho}$, where $\overline{\rho}$ is referred to as the overall modulated output. The state $\rho^\mathrm{in}$ is physical only if it has finite energy. Therefore it has to obey the energy constraint
\begin{equation}
  \sum\limits_i{p_i \, \mathrm{Tr}(\rho_i^\mathrm{in} \, \hat{a}^\dagger \hat{a})} \leq \overline n,
  \label{eq:enconstr}
\end{equation}
where $\overline{n}$ is the maximum mean photon number per use of the channel, and $\hat{a}$ and $\hat{a}^\dagger$ are the annihilation and creation operators.

The classical capacity of channel $T$ is the maximal amount of classical bits which can be transmitted per invocation of the channel via quantum states in the limit of an infinite number of channel uses. This quantity can be calculated with the help of the so-called \emph{one-shot} capacity, given by the Holevo bound \cite{H73}
\begin{equation}
  C_1(T) = \max_{\{\rho_i^\mathrm{in},p_i\}}{\; \left\{ S\left(\sum_i{p_i \, T[\rho_i^\mathrm{in}]}\right) - \sum\limits_i{p_i \, S(T[\rho_i^\mathrm{in}])} \right\}},
  \label{eq:oscapacity}
\end{equation}
with the von Neumann entropy $S(\rho) = -\mathrm{Tr}(\rho\log{\rho})$ where $\log$ denotes the logarithm to base 2. The maximum in \eqref{eq:oscapacity} is taken over all ensembles of $\{p_i,\rho_i^\mathrm{in}\}$ of probability distributions $p_i$ and pure input states $\rho_i^\mathrm{in}$, because it was proven in \cite{SW97} that the optimal input states for noisy quantum channels are pure. 

The term ``one-shot'' capacity denotes the maximal amount of information that can be transmitted by a single use of the channel $T$. Furthermore, a number of $n$ consecutive uses of the channel $T$ can be equivalently considered as one use of a parallel $n$-mode channel $T^{(n)}$. Then an upper bound to the capacity of the channel $T$ is given by the limit:
\begin{equation}\label{eq:gencapacity}
  C(T) = \lim_{n \rightarrow \infty} \frac{1}{n} C_1(T^{(n)}).
\end{equation}

It has been shown that the latter is the actual capacity for particular memoryless \cite{SW97,H98,GGLMSY04} and forgetful channels \cite{KW05}, but generally is only an upper bound on the capacity. Here, we evaluate \eqref{eq:gencapacity} and find a specific encoding that realizes this maximal value. For simplicity, we refer to $C(T)$ as capacity in the following. 

Let us now consider an $n$-mode optical channel $T^{(n)}$. In the following, the number of modes of this channel corresponds to the number of monomodal channel uses. Each mode $j$ is associated with the annihilation and creation operators $\hat{a}_j,\hat{a}_j^{\dagger}$, respectively, or equivalently to the quadrature operators $\hat{q}_j = (\hat{a}_j + \hat{a}_j^{\dagger})/\sqrt{2},\hat{p}_j = i(\hat{a}^\dagger_j - \hat{a}_j)/\sqrt{2}$ which obey the canonical commutation relation $[\hat{q}_i,\hat{p}_j] = i\delta_{ij}$, where $\delta_{ij}$ denotes the Kronecker $\delta$. By defining the vector $\hat{R} = (\hat{q}_1,...,\hat{q}_n;\hat{p}_1,...,\hat{p}_n)^\mathrm{T}$, we can express the displacement vector $m$ and covariance matrix $\gamma$ that fully characterize an $n$-mode Gaussian state $\rho$ as 
\begin{equation}\label{eq:covmatrix}
	\begin{split}
	m & = \mathrm{Tr}{[ \, \rho  \hat{R}]},\\
	\gamma & = \mathrm{Tr}{[ (\hat{R} - m) \, \rho \, (\hat{R} - m)^{\ensuremath{\mathsf{T}}} ]} - \frac{1}{2}J,\\
	J & = i\begin{pmatrix} 0 & I\\-I & 0 \end{pmatrix},
	\end{split}
\end{equation}
where $J$ is the symplectic or commutation matrix with the $n \times n$ identity matrix $I$.

We consider a continuous encoding alphabet, where instead of a discrete index we use a complex number. A message of length $n$ is encoded in a $2n$-dimensional real vector\\ $\alpha = (\Re{\{\alpha_1\}},\Re{\{\alpha_2\}},...,\Re{\{\alpha_n\}},\Im{\{\alpha_1\}},...,\Im{\{\alpha_n\}})^{\ensuremath{\mathsf{T}}}$. Physically, this corresponds to a displacement of the $n$-partite Gaussian input state defined by the covariance matrix $\gamma_{\rm in}$ (and zero mean) in the phase space by $\alpha$ and is denoted by $\rho^\mathrm{in}_\alpha$. The Wigner function of $\rho^\mathrm{in}_\alpha$ reads
\begin{equation}\label{eq:wigfun}
  W^\mathrm{in}_{\alpha}(R) = \frac{\exp{[-\frac{1}{2}(R-\sqrt{2}\alpha)^{\ensuremath{\mathsf{T}}} \, \gamma_\mathrm{in}^{-1} \, (R-\sqrt{2}\alpha)]}}  {(2\pi)^{n} \sqrt{\det{(\gamma_\mathrm{in})}}},
\end{equation}
where here $R \in \mathbb{R}^{2n}$ and denotes the coordinates in the phase space. In the following we refer to this state as quantum input. 

At this point, we follow the standard procedure, that is, we only consider Gaussian distributions so that the overall modulated input state is a Gaussian mixture $\rho^{\mathrm{in}}=\int d^{2n}\alpha f(\alpha) \rho^\mathrm{in}_\alpha$, where $d^{2n}\alpha = d\Re{\{\alpha_1\}}d\Im{\{\alpha_1\}}...d\Re{\{\alpha_n\}}d\Im{\{\alpha_n\}}$ with (classical) Gaussian distribution
\[
	f(\alpha) = \frac{\exp{[-\alpha^{\ensuremath{\mathsf{T}}} \, \gamma_\mathrm{mod}^{-1} \, \alpha]}}  {(2\pi)^{n} \sqrt{\det{(\gamma_\mathrm{mod})}}}.
\]
We refer to the covariance matrix $\gamma_\mathrm{mod}$ as classical input or classical modulation. We can set the displacement of the nonmodulated input state and the mean of the classical modulation to 0 without loss of generality, because displacements do not change the entropy $S(\rho)$. Thus, since we restrict the set of ensembles over which $C_1(T)$ in \eqref{eq:oscapacity} is maximized to Gaussian states and distributions, we compute the so-called ``Gaussian'' capacity. This is a lower bound to the capacity $C_1(T)$ (hence, for
$C(T)$ too) which would be the actual capacity if the Gaussian minimum output entropy conjecture could be proven to hold \cite{GLMSY04,L09}.

The action of the channel $T^{(n)}$ on an $n$-mode input state carrying the message $\alpha$ reads as in \cite{CCMR05}
\begin{equation}
  \begin{split}
         & T^{(n)}[\rho^{\mathrm{in}}_\alpha] = \rho^\mathrm{out}_\alpha = \int d^{2n} \beta \, f_\mathrm{env}(\beta) \\
& \times \hat{D}(\beta_n) \otimes ... \otimes \hat{D}(\beta_1) \; \rho^\mathrm{in}_\alpha \; \hat{D}^{\dagger}(\beta_1) \otimes ... \otimes \hat{D}^{\dagger}(\beta_n),
  \end{split}
  \label{eq:changen}
\end{equation} 
with $\beta = (\Re{\{\beta_1\}},...,\Re{\{\beta_n\}},\Im{\{\beta_1\}},...,\Im{\{\beta_n}\})^{\ensuremath{\mathsf{T}}}$ and the displacement operator $\hat{D}(\beta_j) = e^{\beta_j \hat{a}_j^\dagger - \beta_j^{*} \hat{a}_j}$. The displacement is applied according to the (classical) Gaussian distribution of the noise $f_\mathrm{env}(\beta)$ with covariance matrix $\gamma_\mathrm{env}$ (which also will be referred to as ``environment''). If this matrix is not diagonal, then the environment introduces correlations between the successive uses of the channel. These correlations model the memory of the channel.

The covariance matrices of the nonmodulated output state $\rho^\mathrm{out}_\alpha$ and the overall modulated output state $\overline{\rho}$, read, respectively,
\begin{equation}
  \begin{split}
    \gamma_\mathrm{out} & = \gamma_\mathrm{in} + \gamma_\mathrm{env}\\
    \overline{\gamma}   & = \gamma_\mathrm{out} + \gamma_\mathrm{mod}.
  \end{split}
  \label{eq:chancov}
\end{equation}
Note that the covariance matrix does not depend on the displacement as seen in definition \eqref{eq:wigfun}. Therefore, the covariance matrix of $\rho^\mathrm{out}_\alpha$ is the same for all $\alpha$ and the output entropy of all displaced states is identical. Thus, the one-shot capacity \eqref{eq:oscapacity} of this additive channel reduces to
\begin{equation}
  C_1(T^{(n)}) = \sup_{\gamma_\mathrm{in},\gamma_\mathrm{mod}}{\{S(\overline{\rho}) - S(\rho^\mathrm{out}_\alpha)\}}.
  \label{eq:ncapacity}
\end{equation}
We use, in the following, the reduced Holevo $\chi$-quantity that reads
\begin{equation}\label{eq:chi}
	\chi = S(\overline{\rho}) - S(\rho^\mathrm{out}_\alpha).
\end{equation} 

The von Neumann entropy of a Gaussian state $\rho$ is expressed in terms of the symplectic eigenvalues $\nu_j$ of its covariance matrix:
\begin{eqnarray}\label{eq:S}
	S(\rho) & = & \sum\limits_{i=1}^n{g\left(\nu_i-\frac{1}{2}\right)}, \\
	g(x) & = & \left\{ \begin{array}{ll} (x+1)\log{(x+1)} - x\log{x}, &, \,  x > 0\\
											0, &, \, x = 0.
    				   \end{array}\right.\nonumber
	\label{eq:entropy}
\end{eqnarray}
We note that for quantum states the symplectic eigenvalues are always greater or equal to 1/2.
\section{Optimization problem}\label{sec:optproblem}
\subsection{One mode channel}\label{sec:onemode}
In this subsection we consider the case of a single use of the channel $T^{(1)}$ in a similar fashion as in \cite{SDKC09}. We start our discussion with the results obtained in \cite{SDKC09}. However, we present the solution for the whole input energy domain. We consider the noise covariance matrix to have different variances in the quadratures, denoted $\gamma_{\mathrm{env}}^{q,p}$, where we choose without loss of generality $\gamma_{\mathrm{env}}^{q} > \gamma_{\mathrm{env}}^{p}$ and off diagonal terms $\gamma_{\mathrm{env}}^{qp} = 0$. Any $2 \times 2$ covariance matrix can be reduced to this form by a symplectic and orthogonal transformation, which changes neither the entropy of the output state nor the energy constraint. Therefore, by our choice we do not lose generality. As already discussed, we restrict ourselves to the optimization over Gaussian states.

In the following we determine the Gaussian capacity under the following constraints. The first is the condition that $\rho^{\rm in}$ is a pure state which together with the definition \eqref{eq:covmatrix} and the commutation relation imply
\begin{equation}\label{eq:purity}
	\det{\gamma_{\rm in}} = \frac{1}{4}.
\end{equation}
The second is the input energy constraint \eqref{eq:enconstr}, that reads 
\begin{equation}\label{eq:onelambda}
 	\gamma_{\rm in}^q + \gamma_{\rm in}^p + \gamma_{\rm mod}^q + \gamma_{\rm mod}^p = 2\overline{n} + 1 \equiv \lambda,
\end{equation}
where $\gamma_{\rm in}^{q,p}, \gamma_{\rm mod}^{q,p}$ are the diagonal elements of the matrices $\gamma_{\rm in},\gamma_{\rm mod}$ and $\lambda$ will be referred to as ``input energy'' in the following. Furthermore, in order for $\gamma_{\rm in}^{q,p}$ and $\gamma_{\rm mod}^{q,p}$ to be physical they have to be positive. The optimization problem is solved by using the Lagrange multipliers method, with the total Lagrangian being
\begin{equation}\label{eq:Lone}
	\begin{split}
	{\mathcal L} & = g\left( \overline{\nu} - \frac{1}{2} \right)  - \, g\left( \nu_{\rm out} - \frac{1}{2} \right) -\tau\left(\gamma_{\rm in}^{q}\gamma_{\rm in}^{p} - (\gamma_\mathrm{in}^{qp})^2 - \frac{1}{4} \right)\\
	& - \mu\left(\gamma_{\rm in}^{q}+\gamma_{\rm in}^{p} +\gamma_{\rm mod}^{q}+\gamma_{\rm mod}^{p} -\lambda\right),
	\end{split}
\end{equation}
with
\begin{equation}
	\begin{split}
	& \overline{\nu} = \sqrt{\overline{\gamma}^q \overline{\gamma}^p - (\gamma_\mathrm{in}^{qp} + \gamma_\mathrm{mod}^{qp})^2},\\
	& \nu_{\rm out} = \sqrt{ \gamma_{\rm out}^q \gamma_{\rm out}^p - (\gamma_\mathrm{in}^{qp})^2},
	\end{split}
\end{equation}
where $\gamma_\mathrm{in}^{qp}$, $\gamma_\mathrm{mod}^{qp}$ denote the off-diagonal terms of matrices $\gamma_\mathrm{in}, \gamma_\mathrm{mod}$, where $\overline{\gamma}^{q,p}$, $\gamma_{\rm out}^{q,p}$ denote the diagonal elements of $\overline{\gamma}$, $\gamma_{\rm out}$ and $\tau,\mu$ are Lagrange multipliers. In the following we summarize the solution of the system of equations which correspond to the stationary point of ${\mathcal L}$. The details are presented in Appendix \ref{sec:lagrange}. We also prove in Appendix \ref{sec:lagrange} that ${\mathcal L}$ is concave at this stationary point, which implies that the found solution is indeed a local maximum of ${\mathcal L}$. Though we do not have analytic proof that this maximum is global, all our numerical studies confirm this.

\subsubsection{Quantum-waterfilling solution}\label{sec:qwf}
First, we find that the solution implies that the input and modulation covariance matrix cross terms vanish, that is $\gamma_{\rm in}^{qp} = \gamma_{\rm mod}^{qp} = 0$. Therefore, the diagonal elements $\gamma_{\rm in}^{q,p}, \gamma_{\rm mod}^{q,p}$ are the eigenvalues of $\gamma_{\rm in}, \gamma_{\rm mod}$. Then one obtains the \emph{quantum water-filling} condition \cite{SDKC09} (see Appendix \ref{sec:lagrangeqwf} for details)
\begin{equation}\label{eq:oneqwf}
	\gamma_{\rm in}^q + \gamma_{\rm env}^q + \gamma_{\rm mod}^q = \gamma_{\rm in}^p + \gamma_{\rm env}^p + \gamma_{\rm mod}^p,
\end{equation}
which means that the total output energy
\begin{equation}
	\overline{\lambda} \equiv \lambda + \gamma_{\rm env}^q + \gamma_{\rm env}^p
\end{equation}
is uniformly distributed between the quadratures, that is 
\begin{equation}\label{eq:nuwf}
	\overline{\gamma}^q = \overline{\gamma}^p = \frac{\overline{\lambda}}{2} = \overline{\nu}_{\rm wf}.
\end{equation}
We see that this value is also equal to the overall output symplectic eigenvalue which will be referred to as \emph{water-filling level}. The optimal quantum input is found to be determined by the noise variance ratio, i.e.,
\begin{equation}\label{eq:oneoptin}
	\frac{\gamma_{\rm in}^q}{\gamma_{\rm in}^p} = \frac{\gamma_{\rm env}^q}{\gamma_{\rm env}^p}.
\end{equation}
One of the equations links the Lagrange multiplier $\mu$
\begin{equation}\label{eq:oneqwfmu}
	g'\left(\overline{\nu}_{\rm wf}-\frac{1}{2}\right)=2\mu \Rightarrow \overline{\nu}_{\rm wf} = \frac{1}{2}\coth{(\mu \ln{2})}
\end{equation}
to the water-filling level ${\overline{\nu}}_{\rm wf}$, where $g'(x)$ denotes the derivative of $g(x)$ with respect to $x$ and $\ln{(x)}$ is the natural logarithm. In order for the solution to be physical, all eigenvalues in \eqref{eq:oneqwf} have to be positive. This requires an input energy $\lambda$ being above the threshold $\lambda_{\rm thr}$, i.e.,
\begin{equation}\label{eq:onethr}
	\lambda \geq \lambda_{\rm thr} = \sqrt{\frac{\gamma_\mathrm{env}^q}{\gamma_\mathrm{env}^p}} + \gamma_\mathrm{env}^q - \gamma_\mathrm{env}^p.
\end{equation}

The optimal eigenvalues which are the solutions of \eqref{eq:oneqwf}, \eqref{eq:oneoptin} are schematically depicted in Fig.~\ref{fig:monqwf} (a) for a particular noise covariance matrix.
\begin{figure}	 
	\includegraphics[width=0.5\textwidth]{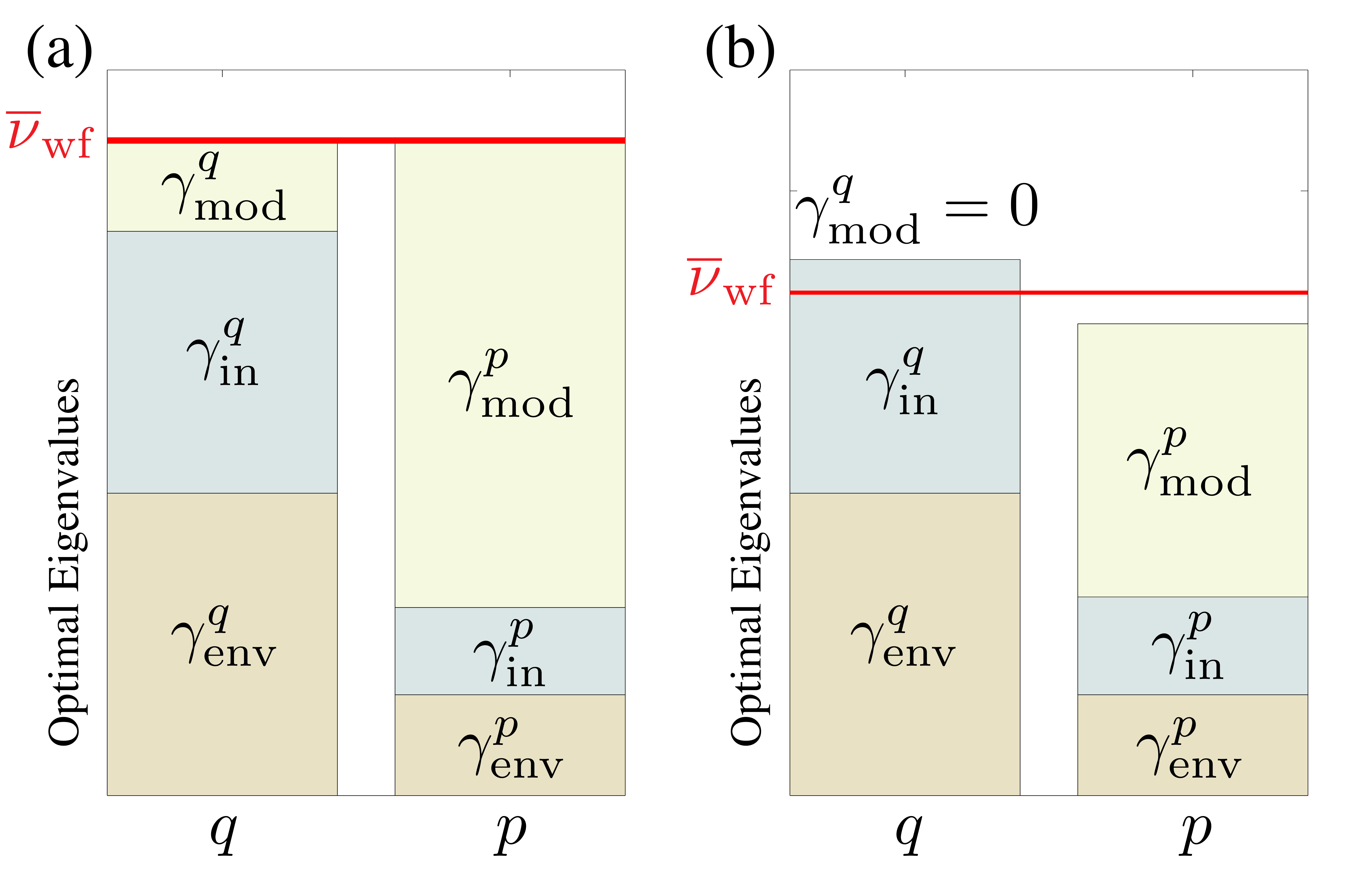}
	\caption{(Color online) Stacked bar plot: optimal eigenvalues of both quadratures. Above and below the threshold the optimal input state is squeezed and the less noisy quadrature is more modulated. (a) $\lambda > \lambda_{\rm thr}$. The optimal eigenvalues are determined by the quantum water-filling solution. ${\overline{\nu}}_{\rm wf}$ denotes the water-filling level. (b) $\lambda < \lambda_{\rm thr}$. The modulation $\gamma_{\rm mod}^q = 0$ since $\gamma_{\rm env}^q > \gamma_{\rm env}^p$. ${\overline{\nu}}_{\rm wf}$ denotes the ``virtual'' water-filling level.}
	\label{fig:monqwf}
\end{figure}
The one-shot Gaussian capacity above the threshold reads
\[
  C_1 = g\left(\overline{n} + \frac{\gamma_{\rm env}^q + \gamma_{\rm env}^p}{2}\right) - g\left(\sqrt{\gamma_{\rm env}^q \, \gamma_{\rm env}^p}\right).
\]
It was shown in \cite{H05} that the one-shot Gaussian capacity is additive for such a tensor product channel and therefore $C_1$ is the Gaussian capacity $C$ above threshold.

\subsubsection{Solution below the threshold}\label{sec:noqwf}
If the input energy $\lambda$ is below the threshold \eqref{eq:onethr}, then the solution of the Eqs. \eqref{eq:oneqwf} and \eqref{eq:oneoptin} and the purity constraint \eqref{eq:purity} might result in negative modulation eigenvalues, which would be unphysical. Indeed, from the solution given by Eqs.~\eqref{eq:oneqwf} and \eqref{eq:oneoptin} one can see that when $\lambda$ decreases the water-filling level $\overline{\nu}_{\rm wf}$ as defined in Eq.~\eqref{eq:nuwf} [see Fig.~\ref{fig:monqwf}~(a)] also decreases and at $\lambda=\lambda_{\rm thr}$ it crosses the level $\gamma_{\rm in}^q~+~\gamma_{\rm env}^q$. For lower $\lambda$, in order to satisfy Eqs.~\eqref{eq:oneqwf} and \eqref{eq:oneoptin} the modulation eigenvalue $\gamma_{\rm mod}^q$ becomes negative. In this case the solution of the optimization problem lays on the border of the valid physical region. As shown in Appendix \ref{sec:lagrangenowf} we have to set $\gamma_{\rm mod}^{q} = 0$ [see Fig.~\ref{fig:monqwf} (b)] and to solve the new optimization problems with a modified ${\mathcal L}$ taking into account this new condition. Here we summarize the results which are presented in details in Appendix \ref{sec:lagrangenowf}. We find that the off-diagonal terms again vanish, i.e., $\gamma_{\rm in}^{qp} = \gamma_{\rm mod}^{qp} = 0$. 

Now, one obtains a transcendental equation that solves the problem below the threshold, i.e.,
\begin{equation}\label{eq:onebt}
		\frac{g'(\overline{\nu} - \frac{1}{2})}{2 \overline{\nu}}\left({\overline{\gamma}}^p - {\overline{\gamma}}^q\right) = \frac{g'(\nu_{\rm out} - \frac{1}{2})}{2 \nu_{\rm out}}\left(\gamma_{\rm out}^p - \frac{\gamma_{\rm in}^p}{\gamma_{\rm in}^q}\gamma_{\rm out}^q\right).
\end{equation}
In Appendix \ref{sec:bounds} we derive from this equation that
\[
	\frac{1}{2} \leq \gamma_{\rm in}^q < \frac{1}{2} \sqrt{\frac{\gamma_{\rm env}^q}{\gamma_{\rm env}^p}}
\]
which means that the more noisier quadrature is always antisqueezed, and therefore the less noisy quadrature is squeezed. We remark that in the limit $\lambda \rightarrow \lambda_{\rm thr}$ the solution of \eqref{eq:onebt} coincides with the water-filling solution \eqref{eq:oneqwf}, \eqref{eq:oneoptin}. The lower bound is reached for $\lambda \rightarrow 1$ and thus corresponds to vanishing modulation eigenvalues (absence of information transmission). We note that Eq.~\eqref{eq:oneqwfmu} which determined the water-filling level reads now
\begin{equation}\label{eq:onebtmu}
	\frac{g'(\overline{\nu} - \frac{1}{2})}{2 \overline{\nu}} {\overline{\gamma}}^q = \mu.
\end{equation}
This equation now determines $\mu$ using the solution of \eqref{eq:onebt}. Although we clearly have no longer a water-filling solution, we can calculate the quantity $\overline{\nu}_{\rm wf}$ following \eqref{eq:oneqwfmu} with $\mu$ determined by \eqref{eq:onebtmu}, but refer to it as a ``virtual'' water-filling level. This quantity will have an important meaning when evaluating the solution for the multimode channel. The reason is that Eqs.~\eqref{eq:oneqwfmu} and \eqref{eq:onebtmu} in the multimode channel problem will govern the distribution of input energy between the channels, because $\mu$ is a monotonically decreasing function of $\lambda$. For $\lambda \geq \lambda_{\rm thr}$ this can be seen from the fact that $\mu$ is a monotonically decreasing function of $\overline{\nu}_{\rm wf}$ [see Eq.~\eqref{eq:oneqwfmu}] and $\overline{\nu}_{\rm wf}$ is a monotonically increasing function of $\lambda$. For $\lambda < \lambda_{\rm thr}$ this is proven in Appendix \ref{sec:mulambda}. This property allows us to relate $\lambda_{\rm thr}$ via Eq.~\eqref{eq:oneqwfmu} to 
\begin{equation}
	\mu_{\mathrm{thr}} = \frac{1}{2} g'\left(\frac{1}{2}(\lambda_{{\rm thr}}+\gamma_{{\rm env}}^{q}+\gamma_{{\rm env}}^{p}) - \frac{1}{2}\right)
	\label{eq:muthr}
\end{equation}
such that if $\lambda \geq \lambda_{\rm thr}$, then the corresponding $\mu \leq \mu_{\rm thr}$. Moreover, for the lowest input energy $\lambda = 1$ we can define using Eq.~\eqref{eq:onebtmu} an upper bound $\mu_0$ for all possible values of $\mu$ that correspond to $\lambda > 1$, that reads
\begin{equation}
	\mu_{0} = \frac{1}{2} g'\left(\sqrt{(\gamma_{{\rm env}}^{q}+\frac{1}{2})(\gamma_{{\rm env}}^{p}+\frac{1}{2})} - \frac{1}{2}\right) \sqrt{\frac{\gamma_{{\rm env}}^{q}+
\frac{1}{2}}{\gamma_{{\rm env}}^{p}+\frac{1}{2}}}.
	\label{eq:mu0}
\end{equation}

In Appendix \ref{sec:bounds} and \ref{sec:mulambda} we draw additional conclusions from Eqs.~\eqref{eq:onebt} and \eqref{eq:onebtmu}. We show that $\overline{\gamma}^p<\overline{\gamma}^q$ and $\overline{\gamma}^p \geq 1/2$. Furthermore, we find that $d \gamma_{\rm in}^q/ d \lambda > 0$ which means that the antisqueezing is increasing with the input energy. Moreover, we find that $d \overline{\gamma}^p/d \lambda > 0$ which implies together with $d \gamma_{\rm in}^q/ d \lambda > 0$ that the modulation in the less noisy quadrature is increasing with $\lambda$. 

The optimal eigenvalues follow from the solution of \eqref{eq:onebt} and are schematically depicted in Fig.~\ref{fig:monqwf} (b) for a particular chosen noise. The Gaussian capacity below the threshold is then calculated by inserting the optimal eigenvalues into Eq.~\eqref{eq:chi}.

\subsection{Multimode channel}\label{sec:multimode}
We consider in this paper channels with noise correlations only between the uses of the channel without $q-p$ correlations. In the equivalent representation of $n$ successive uses of such a one-mode channel by an $n$-mode parallel channel the covariance matrix of the noise has the form
\begin{equation}\label{eq:envcov}
	\gamma_{\rm env} = 
	\begin{pmatrix}
		\gamma_{\rm env}^q & 0\\
		0 & \gamma_{\rm env}^p
	\end{pmatrix}
\end{equation}
where $\gamma_{\rm env}^q$, $\gamma_{\rm env}^p$ are matrices of dimension $n \times n$. The absence of $q$ and $p$ correlations is generally considered to describe a natural noise. If $\gamma_{\rm env}^q$ and $\gamma_{\rm env}^p$ commute, then one can diagonalize $\gamma_{\rm env}$ via a symplectic and orthogonal transformation which changes neither the output entropy of the system nor the energy constraint. In the new basis the channel becomes a tensor product of monomodal Gaussian additive noise channels, for which it was proven that the Gaussian capacity is additive \cite{SEW05,H05}. Therefore, the optimal input and modulation covariance matrices are diagonal, as well as the noise covariance matrix. 

Then the symplectic eigenvalues are functions of the eigenvalues of the corresponding covariance matrices:
\begin{eqnarray}
  \overline{\nu}_i
  & =& \sqrt{\overline{\gamma}_i^q\,\overline{\gamma}_i^p} \, , \quad
  \overline{\gamma}_i^{q,p}
   = \gamma_{{\rm in},i}^{q,p} + \gamma_{{\rm mod},i}^{q,p} + \gamma_{{\rm env},i}^{q,p} \, , \nonumber\\
  \nu_{{\rm out},i}
  & = & \sqrt{\gamma_{{\rm out},i}^q  \gamma_{{\rm out},i}^p}\, , \quad
  \gamma_{{\rm out},i}^{q,p} =  \gamma_{{\rm in},i}^{q,p} + \gamma_{{\rm env},i}^{q,p} \, .
  \label{eq:nubar}
\end{eqnarray}
We denote the input energy allocated to channel $i$ as
\begin{equation}\label{eq:lambdai}
  \lambda_i \equiv \gamma^q_{{\rm in},i}+ \gamma^p_{{\rm in},i} +\gamma^q_{{\rm mod},i} + \gamma^p_{{\rm mod},i}.
\end{equation}
The energy (or photon number) constraint \eqref{eq:enconstr}, \eqref{eq:onelambda}, for $n$ modes can now be written as
\begin{equation}\label{eq:lambda}
  \lambda \equiv \sum\limits_{i=1}^n \lambda_i.
\end{equation}
The total output energy $\overline{\lambda}$ of the $n$-mode channel is the sum of the input energy $\lambda$ and the total energy of the noise $\lambda_{\rm env}$ (environment) 
\begin{equation}
  \overline{\lambda}=\lambda+\lambda_{\rm env}\, ,  \qquad
  \lambda_{\rm env} = \sum\limits_{i=1}^n\left(\gamma_{{\rm env},i}^q+\gamma_{{\rm env},i}^p\right).
\end{equation}

In \cite{SDKC09} the equipartition of the total output energy between the modes was obtained as the solution for such a model which we called \emph{global water-filling}, similar to the classical water-filling solution of $n$ parallel classical Gaussian channels \cite{Cover}. As in the one-mode case this solution only holds above a certain input energy threshold. Now we extend the discussion to energies below this threshold.

The maximum of the Holevo quantity of $n$ parallel channels is again determined using the Lagrange multipliers method (as in the one-mode case), where we now have $n$ purity constraints,
\begin{equation}\label{eq:pur}
  \gamma_{{\rm in},i}^{q}\gamma_{{\rm in},i}^{p} = \frac{1}{4},
\end{equation}
and the overall input energy constraint, \eqref{eq:lambda}. The Lagrangian is then constructed by a sum of $n$ Holevo $\chi$-quantities \eqref{eq:chi} for corresponding modes, $n$ Lagrange multipliers $\tau_i$ for the purity constraints, and only one common multiplier $\mu$ for the input energy constraint. The fact that the solution of the system of equations which results from the Lagrange multipliers method maximizes the Holevo quantity under the given constraints follows from the results on convex separable minimization subject to bounded variables found in \cite{S00}. This was first pointed out for a lossy channel in \cite{PLM09}. In connection to our problem it follows that the maximum is attained by our solution in the multimode case provided that, for the one-mode case, $\chi$ is a concave function of $\lambda$ on the solution. The proof of the concavity of $\chi$ in $\lambda$ on the one-mode solution is presented in Appendix \ref{sec:concavity}.

In general, the maximum of the Lagrangian corresponds to a partition of $n$ modes into three different sets, corresponding to one of three types of input energy distributions within each mode: the case of a quantum water-filling solution with four positive eigenvalues (see Appendix \ref{sec:qwf}), the case of one vanishing modulation eigenvalue with three positive eigenvalues (see Appendix \ref{sec:noqwf}), or the case when both modulation eigenvalues vanish and the mode does not participate in information transmission (i.e., unmodulated vacuum is sent). We denote the corresponding sets by: ${\mathcal N}_3$, ${\mathcal N}_2$ and ${\mathcal N}_1$. We denote the number of modes in the sets as $n_1$, $n_2$ and $n_3$, with $n = n_1 + n_2 + n_3$. Furthermore, we denote the input energies per set by $\lambda^{(1)},\lambda^{(2)},\lambda^{(3)}$, where
\begin{equation}\label{eq:lambda123}
	\lambda^{(1)} = \sum\limits_{i \in {\mathcal N}_1}{\lambda_i}, \; \lambda^{(2)} = \sum\limits_{i \in {\mathcal N}_2}{\lambda_i}, \; \lambda^{(3)} = \sum\limits_{i \in {\mathcal N}_3}{\lambda_i},
\end{equation}
which sum up to the total input energy $\lambda$ \eqref{eq:lambda}.

\subsubsection{Set ${\mathcal N}_3$: Modes with water-filling solution}\label{sec:setN3} 
For all modes that belong to ${\mathcal N}_3$ the water-filling solution described in section \ref{sec:qwf} holds. This means that the input energy allocated to each mode cannot be lower than its corresponding energy threshold, i.e., 
\begin{equation}\label{eq:threshni}
	\lambda_i \geq \lambda_{i,{\rm thr}}, \quad i \in {\mathcal N_3}.
\end{equation}
where $\lambda_{i,{\rm thr}}$ reads as in \eqref{eq:onethr} (for all $i$). Then for all $i \in {\mathcal N_3}$ the energy equipartition \eqref{eq:oneqwf} holds. Moreover Eq.~\eqref{eq:oneqwfmu} guarantees  that $\overline{\nu}_{\rm wf}$ is the common water-filling level for all modes due to the common Lagrange multiplier $\mu$, which is a monotonous function of $\overline{\nu}_{\rm wf}$, which now reads
\begin{equation}\label{eq:wfl}
  \overline{\nu}_{\rm wf} \equiv \overline{\gamma}^{q,p}_i = \frac{\overline{\lambda}^{(3)}}{2n_3}, \quad i \in {\mathcal N}_3,
\end{equation}
where $\overline{\lambda}^{(3)}$ is the total energy at the output of the modes belonging to set ${\mathcal N}_3$. 

As the partition of the input energy between the modes is \emph{a priori} not known the distribution of the modes between the sets is also not defined. However, we can determine whether a particular mode belongs to set ${\mathcal N}_3$ using the Lagrange multiplier $\mu$. This is possible, because as mentioned before, $\mu$ is a monotonically decreasing function of the input energy $\lambda$ and for $\lambda_i \geq \lambda_{{\rm thr},i}$ we have $\mu \leq \mu_{{\rm thr},i}$ defined in \eqref{eq:muthr} and depends only on the noise eigenvalues of mode $i$. Then we can formalize the definition of ${\mathcal N}_3$ using $\mu_{\mathrm{thr},i}$ as
\begin{equation}
	{\mathcal N}_3 = \{i | \; \mu_{\mathrm{thr},i} \geq \mu \}.
\end{equation}

If $\mu_{\mathrm{thr},i} \geq \mu$ for all $i$ then the set ${\mathcal N}_3$ contains all modes and we have
\begin{equation}
 \lambda^{(3)} = \lambda.
\end{equation}
In this case, Eqs.~\eqref{eq:oneqwf}, \eqref{eq:oneoptin}, and \eqref{eq:wfl} determine the global water-filling solution with
\begin{equation}\label{eq:nuglobalwf}
	{\overline{\nu}}_{\rm wf} = \frac{\overline{\lambda}}{2n}
\end{equation}
and
\begin{equation}\label{eq:nuoutglobalwf}
	\nu_{{\rm out},i} = \sqrt{\gamma_{{\rm env},i}^q\gamma_{{\rm env},i}^p} + \frac{1}{2}.
\end{equation}
If the condition \eqref{eq:threshni} is not satisfied for at least one mode, then this \emph{global water-filling} solution has no physical meaning because it will lead to negative values of some modulation eigenvalues. 
\subsubsection{Set ${\mathcal N}_1$: Modes excluded from information transmission}\label{sec:setN1}
Modes for which both modulation eigenvalues are 0 do not contribute to the Holevo quantity or, consequently to the information transmission. The zero modulation eigenvalues 
$\gamma_{{\rm mod},i}^{q,p} = 0, \, i \in {\mathcal N}_1$ imply
\begin{equation}\label{eq:gtgo}
  \overline{\gamma}^{q,p}_i=\gamma_{{\rm out},i}^{q,p}, \quad i \in {\mathcal N}_1.
\end{equation}
Obviously if the mode is not modulated there is no reason to spend input energy on the squeezing of this mode, which results in the vacuum state being the optimal input state
\begin{equation}\label{eq:purc}
 \gamma^{q,p}_{{\rm in},i}=\frac{1}{2}, \quad i \in {\mathcal N}_1.
\end{equation}
This is consistent with \eqref{eq:onebt} from which we obtain Eq.~\eqref{eq:purc} for $\lambda_i = 1$. 

In order to deduce the set of modes that are excluded from information transmission we can use the threshold value $\mu_{0,i}$ defined in Eq.~\eqref{eq:mu0} which corresponds to $\lambda_i = 1$ (vacuum energy) 
\[
	{\mathcal N}_1 = \{i | \; \mu \geq \mu_{0,i} \}.
\]
\subsubsection{Set ${\mathcal N}_2$: Single-quadrature modulated modes}\label{sec:setN2}
For the modes for which $1 < \lambda < \lambda_{{\rm thr},i}$ the water-filling solution no longer holds. We have to set the modulation eigenvalue of the noisier quadrature to 0 in the same way as in the one-mode case. Again, as in the one mode case we assume that for each mode $i$ the $q$ quadrature is noisier than the $p$ quadrature. We can do this without loss of generality because, first, for a one-mode channel a swap of the noise quadratures does not change the one-shot capacity, and second, the one-shot capacity of the discussed multimode channel is additive. Then we have to set the modulation of the $q$ quadrature for all modes belonging to set ${\mathcal N}_2$ to 0, i.e.,
\begin{equation}\label{eq:mod0}
  \gamma_{{\rm mod},i}^{q} = 0, \quad i \in {\mathcal N}_2.
\end{equation}
This implies
\begin{equation}\label{eq:gn2}
  \overline{\gamma}^{q}_i=\gamma_{{\rm out},i}^{q}, \quad i \in {\mathcal N}_2.
\end{equation}
With the functions $\mu_{{\rm thr},i}$ and $\mu_{0,i}$ defined in \eqref{eq:muthr}, \eqref{eq:mu0} we can simply define this set, i.e.,
\begin{equation}
	{\mathcal N}_2 = \{i | \; \mu_{\mathrm{thr},i} < \mu < \mu_{0,i} \}.
\end{equation}

The eigenvalues that solve the optimization problem for the modes of this set are found using Eq.~\eqref{eq:onebt}, \eqref{eq:onebtmu}.

We note that both, $\mu_{0,i}$ and $\mu_{{\rm thr},i}$ depend only on the noise eigenvalues. Therefore, the partition into the three sets is completely determined by only one parameter $\mu$. Furthermore, we recall that $\mu$ is the common parameter which enters the equations for sets ${\mathcal{N}_2},{\mathcal{N}_3}$.
\section{Solution for arbitrary number of modes}\label{sec:solarb}
\subsection{Finite number of modes}
Recall that the solution of the problem for $n$ modes is given by the optimal distribution of the input energy between the modes.
The optimal energy distribution within one mode depends on its corresponding set, which is given by the noise spectrum and the global parameter $\mu$. 

Now we present the algorithm that allows us to find the solution of our optimization problem. First, we further develop the equations that correspond to modes of set ${\mathcal N}_2$. We call the right-hand side of Eq.~\eqref{eq:onebt}
\begin{equation}\label{eq:f}
	f(\gamma_{{\rm in},i}^{q}) \equiv \frac{g'\left(\nu_{{\rm out},i}-\frac{1}{2}\right)}{2 \nu_{{\rm out},i}}
      		\left(\gamma_{{\rm out},i}^{p}
  - \frac{\gamma_{{\rm in},i}^{p}}{\gamma_{{\rm in},i}^{q}}\gamma_{{\rm out},i}^{q}
          \right).
\end{equation}
We note that for given noise eigenvalues $f$ is a function of only one independent variable $\gamma^q_{{\rm in},i}$. Using definition \eqref{eq:lambdai} and the fact that, for the modes belonging to the second set $\gamma_{{\rm mod},i}^q = 0$, we rewrite
\begin{equation}\label{eq:diffgbarqp}
	\overline{\gamma}_i^{p}-\overline{\gamma}_i^{q} = \lambda_i - 2	\gamma_{{\rm in},i}^{q} - \gamma_{{\rm env},i}^{q} + \gamma_{{\rm env},i}^{p}.
\end{equation}
Furthermore, from \eqref{eq:onebtmu} we express
\begin{equation}\label{eq:fracmugbarq}
	\frac{g'(\overline{\nu}_i - \frac{1}{2})}{2 \overline{\nu}_i} = \frac{\mu}{{\overline{\gamma}}^q_i}.
\end{equation}
Then we insert Eqs.~\eqref{eq:diffgbarqp} and \eqref{eq:fracmugbarq} together into the left hand side of Eq.~\eqref{eq:onebt} and obtain with $\overline{\gamma}^{q}_i=\gamma_{{\rm out},i}^{q}$ and \eqref{eq:f}
\begin{equation}\label{eq:lambdaiset2}
	\lambda_i(\gamma_{{\rm in},i}^{q},\mu) = \frac{\gamma_{{\rm out},i}^{q}}{\mu} f(\gamma_{{\rm in},i}^{q}) + 2 \gamma_{{\rm in},i}^{q} + \gamma_{{\rm env},i}^{q} - \gamma_{{\rm env},i}^{p}.
\end{equation}
This means that we have established a relation between the optimal input eigenvalues $\gamma_{{\rm in},i}^{q}, \; i \in {\mathcal N}_2$, the global parameter $\mu$ and the optimal input energy distribution $\lambda_i$ between the modes in ${\mathcal N}_2$. Using Eq.~\eqref{eq:lambdaiset2} and definition \eqref{eq:lambdai} we can now eliminate variable $\lambda_i$ from Eq.~\eqref{eq:diffgbarqp}. Thus, we obtain a transcendental equation that determines the optimal input eigenvalues $\gamma_{{\rm in},i}^q$ as an implicit function of $\mu$:
\begin{equation}\label{eq:optin}
	g'\left(\gamma_{{\rm out},i}^{q}\sqrt{1 +  \frac{f(\gamma_{{\rm in},i}^{q})}{\mu}} - \frac{1}{2}\right)
	= 2\mu \, \sqrt{1 +  \frac{f(\gamma_{{\rm in},i}^{q})}{\mu}}.
\end{equation}

Now we are ready to calculate the input energies of all three sets for a given $\mu$. First, we evaluate the total input energy of ``water-filling'' modes, i.e. modes in ${\mathcal N}_3$. Using Eqs.~\eqref{eq:lambdai}, \eqref{eq:lambda123} and \eqref{eq:wfl} we deduce the total input energy used for the modes in ${\mathcal N}_3$ as a function of $\mu$
\begin{equation}\label{eq:lambda3mu}
	\lambda^{(3)}(\mu) = \sum\limits_{i \in {\mathcal N}_3}{\left(2\overline{\nu}_{\rm wf}(\mu)-\gamma_{{\rm env},i}^{q}-\gamma_{{\rm env},i}^{p}\right)}.
\end{equation}
Second, the total (vacuum) energy of modes belonging to ${\mathcal N}_1$, using \eqref{eq:mu0}, reads
\begin{equation}
	\lambda^{(1)}(\mu) = \sum\limits_{i \in {\mathcal N}_1}\, 1 = n_1.
\end{equation}
Functions $\lambda^{(1)}(\mu)$ and $\lambda^{(3)}(\mu)$ depend on $\mu$ through ${\mathcal N}_1, {\mathcal N}_3$ and $\overline{\nu}_{\rm wf}$, which are only functions of $\mu$ and the noise eigenvalues. The total input energy for modes in ${\mathcal N}_2$ is the sum
\begin{equation}\label{eq:lambda2mu}
	\lambda^{(2)}(\mu) = \sum\limits_{i \in {\mathcal N}_2}{\lambda_i(\gamma_{{\rm in},i}^{q},\mu)}.
\end{equation}
Now we apply the overall input energy constraint
\begin{equation}\label{eq:lambda123mu}
	\lambda^{(1)}(\mu) + \lambda^{(2)}(\mu) + \lambda^{(3)}(\mu) = \lambda.
\end{equation}
Thus, we obtain a closed equation on $\mu$ which we can solve by iterations. The solution of this system of equations provides us with $n_2$ optimal eigenvalues $\gamma_{{\rm in},i}^{q}$ and $\mu$ which determine all other eigenvalues. Once the optimal spectra are obtained one can calculate the one-shot capacity of $n$ modes ($n$ successive uses) of the channel using Eqs.~\eqref{eq:ncapacity} and \eqref{eq:S}
\begin{equation}\label{eq:oscapacitygfun}
	 C_1(T^{(n)}) = \sum_{i=1}^n\left( g\left(\overline{\nu}_i -\frac{1}{2}\right) - g\left(\nu_{{\rm out},i}-\frac{1}{2}\right) \right),
\end{equation}
where here $\overline{\nu}_i,\nu_{{\rm out},i}$ contain the obtained optimal input and modulation spectra. This one-shot capacity of the $n$-mode channel may be considered as $n$ times the capacity of the one mode channel.

\subsection{Infinite number of modes}
In order to make the transition to an infinite number of channel uses we have to consider a parallel channel with an infinite number of one mode channels, $n \rightarrow \infty$. In this limit all functions previously labeled with $i$ depend now on a continuous parameter $x$ defined on a proper domain which depends on the particular model. All sums that run from $i = 1,...,n$ now become integrals over the whole domain of $x$. The three sets become now sets of continuous variables and cover the whole domain of $x$; they read
\begin{equation}\label{eq:N1N2N3cont}
	\begin{split}
		{\mathcal N}_1 & = \{x | \; \mu_0(x) \le \mu\},\\
		{\mathcal N}_2 & = \{x | \; \mu_{\mathrm{thr}}(x) < \mu < \mu_{0}(x) \},\\
		{\mathcal N}_3 & = \{x | \; \mu_{\mathrm{thr}}(x) \geq \mu\},
	\end{split}
\end{equation}
where $\mu_{\mathrm{thr}}(x), \mu_{0}(x)$ are defined as in \eqref{eq:muthr}, \eqref{eq:mu0} where index $i$ is replaced by $x$. Equations \eqref{eq:optin}-\eqref{eq:lambda2mu} remain the same, except for the replacements $\gamma_{{\rm in},i}^q$ by $\gamma_{\rm in}^q(x)$ and the sums over $i$ by integrals over $x$.

Once the $\mu$ is found which is the solution of \eqref{eq:lambda123mu} we can determine the optimal spectra $\gamma_{\rm in}^{q,p}(x)$ and $\gamma_{\rm mod}^{q,p}(x)$. The found optimal spectra are used to evaluate the capacity 
\begin{equation}\label{eq:capacity}
	\begin{split}
	C & = \lim_{n \rightarrow \infty}{\frac{1}{n}C_1(T^{(n)})}\\
	  & = \frac{1}{|A|}\int\limits_{x \in A}{dx \, \left( g\left(\overline{\nu}(x) -\frac{1}{2}\right) - g\left(\nu_{\rm out}(x)-\frac{1}{2}\right) \right)},
	\end{split}
\end{equation}
where $A$ is the spectral domain of $x$ and $|A|$ is its size. In the case of a global water-filling, i.e. if $\mu_{\rm thr}(x) \geq \mu \; \forall x$ then \eqref{eq:capacity} simplifies with \eqref{eq:nuglobalwf} and \eqref{eq:nuoutglobalwf} to
\begin{equation}\label{eq:capacityglobalwf}
	\begin{split}
	C & = g\left(\overline{n} + \frac{1}{2|A|}\int_{x \in A}{dx \, \{\gamma_{\rm env}^q(x) + \gamma_{\rm env}^p(x) \}}\right)\\
	  & - \frac{1}{|A|}\int_{x \in A}{dx  \, g\left(\sqrt{\gamma_{\rm env}^q(x)\gamma_{\rm env}^p(x)}\right)},
	\end{split}
\end{equation}
where $\gamma_{\rm env}^{q,p}(x)$ are the noise eigenvalue spectra.
\begin{figure}
	\centering
		\includegraphics[width=0.5\textwidth]{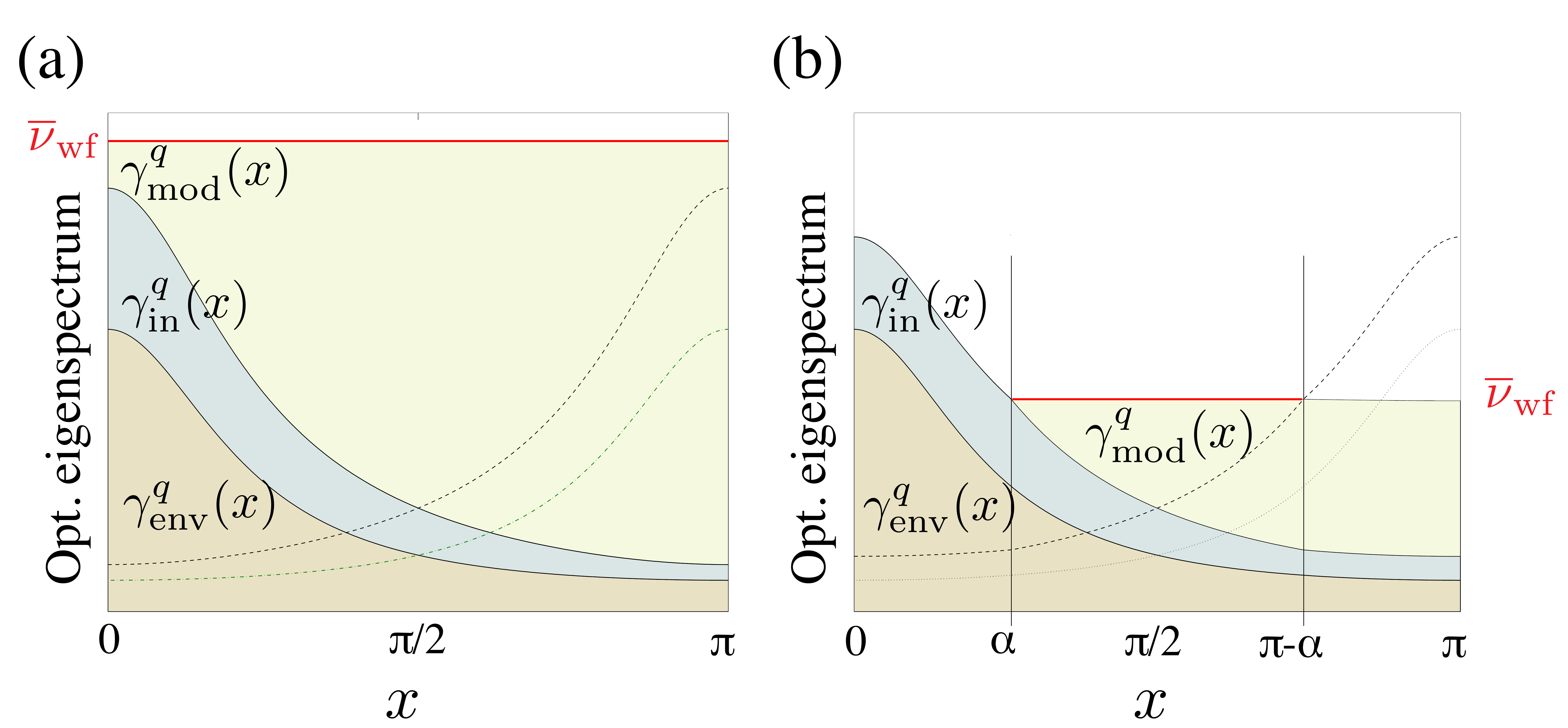} 		 
	\caption{(Color online) Stacked area plot: optimal input and modulation eigenvalue spectra $\gamma^q_{\rm in}(x),\gamma^q_{\rm mod}(x)$ and noise spectrum $\gamma^q_{\rm env}(x)$ (for $p$ in dashed and dotted curves) for a particular $\phi$ and $N$. (a) Global quantum water-filling solution, where $\lambda > \lambda_{\rm thr}$. $\overline{\nu}_{\rm wf}$ (solid bar) denotes the water-filling level. (b) Below threshold: Modes with $x \in [\alpha,\pi-\alpha]$ belong to ${\mathcal N}_3$ with water-filling level $\overline{\nu}_{\rm wf}$ (solid bar). Modes with $x \in [0,\alpha]$ and $x \in [\pi-\alpha,\pi]$ belong to set ${\mathcal N}_2$, where the modulation is below $\overline{\nu}_{\rm wf}$.}
	\label{fig:markqwf}
\end{figure}
\begin{figure}
	\centering
		\includegraphics[width=0.5\textwidth]{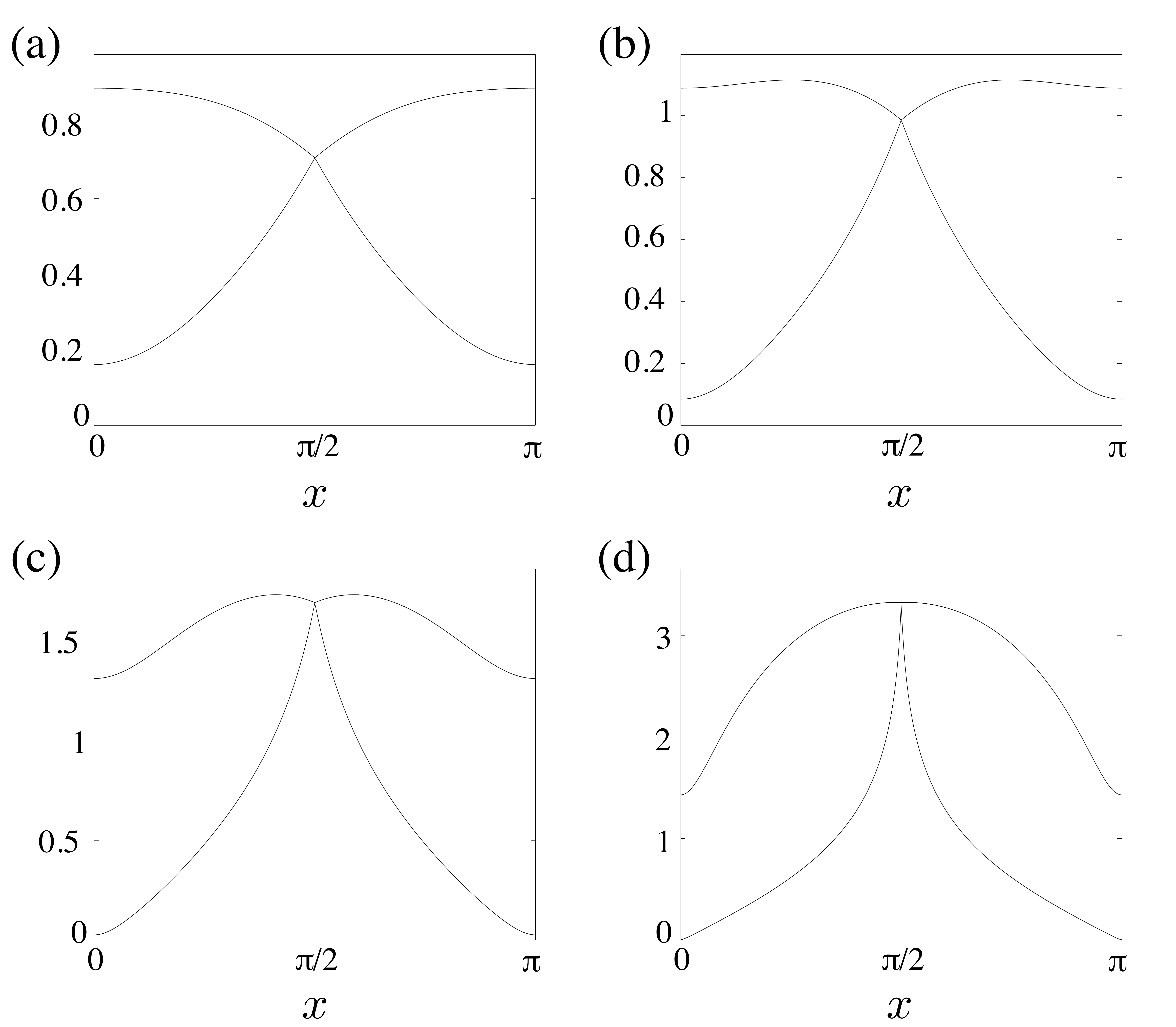} 
	\caption{Functions $\mu_0(x)$ (upper curve) $ \mu_{\rm thr}(x)$ (lower curve) for (a) $\phi = 0.5$, (b) $\phi = 0.7$, (c) $\phi = 0.9$, (d) $\phi = 0.99$. For all plots we took $N=1$.}
	\label{fig:mufuns}
\end{figure}
	
\section{Gauss-Markov channel}\label{sec:gaussmarkov}
Now we consider a particular channel where the correlations of the noise are modeled by a Gauss-Markov process. Note, that for the whole class of noises with correlations given by stationary (shift-invariant) Gauss processes, the covariance matrix that describes such noise is a symmetric Toeplitz matrix. A straightforward example is the Markov process of order ${\mathcal P}$, also called autoregressive process with white Gaussian noise \cite{SZ99} (see Appendix \ref{sec:arp}). Our above treatment applies to such channels. However, in the following, we restrict to the simplest case of ${\mathcal P}=1$.
We have already introduced the covariance matrix of such a noise in \cite{SDKC09} and state here only its definition. Essentially this is a classical noise with nearest-neighbor correlations in the $q$ quadratures (anti-correlations in the $p$ quadratures). Its covariance matrix contains Toeplitz matrices for the $q$ quadrature and $p$ quadrature, so that
\begin{equation}
  \gamma_\mathrm{env} = 
  \begin{pmatrix}
  M(\phi) & 0\\
	0 & M(-\phi)
	\end{pmatrix},
	\label{eq:marknoise}
\end{equation} 
with
\begin{equation}
	M_{ij}(\phi) = N\phi^{|i-j|}, \quad 0 \leq \phi < 1, \, N \ge 0,
   \label{eq:markmat}
\end{equation}
where $\phi$ is the correlation parameter and $N$ is the variance of the noise. We remark that a Toeplitz matrix has the same values on each $k=|i-j|$th diagonal.

Matrix \eqref{eq:marknoise} can be diagonalized in the limit of\\ $n \rightarrow \infty$ using a passive symplectic transformation \cite{SDKC09}, which allows us to study the channel system in the diagonal, noncorrelated basis, since entropies and the input energy constraint remain unchanged by such transformations.

The spectra of the noise quadratures in the limit of an infinite number of uses of \eqref{eq:marknoise} read \cite{SDKC09}
\begin{equation}\label{eq:genvqp}
	\gamma_{\rm env}^{q,p}(x) = N \, \frac{1 - \phi^2}{1 + \phi^2 \mp 2\phi \cos(x)}, \quad x \in [0,2\pi],
\end{equation}
with the upper sign for the $q$ quadrature and the lower sign for the $p$ quadrature. As this spectrum is mirror symmetric with respect to $x=\pi$ and since the Gaussian capacity of the channel is additive, we reduce the spectral domain to $x \in [0,\pi]$.

In order to find the capacity of the channel for given noise parameters $N,\phi$, we first check whether the threshold condition \eqref{eq:threshni} (here with continuous parameter $x$ replacing index $i$) is fulfilled for all $x$. If this is the case, then the solution is a global water-filling, depicted in Fig.~\ref{fig:markqwf} (a), where optimal eigenvalue spectra are obtained by \eqref{eq:oneoptin}, \eqref{eq:oneqwf} (for all $x$), and the capacity \eqref{eq:capacityglobalwf} can be simplified to \cite{SDKC09}
\begin{equation}\label{eq:Cat}
	\begin{split}
	C & = g({\overline{n}}+N) - \frac{1}{\pi}\int_0^\pi{d x \, g\left(\sqrt{\gamma_{\rm env}^{q}(x)\gamma_{\rm env}^{p}(x)}\right)},\\ 	
	\overline{n} & \geq \frac{2\phi}{1-\phi}\left(N+\frac{1}{2}\right) \equiv \frac{\lambda_{\rm thr}-1}{2}.
	\end{split}
\end{equation}
If the threshold condition is violated then we can apply the algorithm which is presented in section \ref{sec:solarb}. 

We note that for different noise parameters the threshold functions $\mu_0(x)$, $\mu_{\rm thr}(x)$ may have a complicated profile as depicted in Fig.~\ref{fig:mufuns} for different noise parameter values. In Fig. \ref{fig:mufixednoise} we illustrate, for a particular choice of the noise parameters ($N$, $\phi$) different partitions of the spectral domain between the sets for different input energies $\lambda$ corresponding to different $\mu$. 

Our result confirms that the modes belonging to ${\mathcal N}_2$ are squeezed in the less noisy quadrature which is the one that is modulated, as depicted in Fig.~\ref{fig:optinmod}. An example plot of optimal input and modulation spectra for $\lambda < \lambda_{\rm thr}$ is shown in Fig.~\ref{fig:markqwf} (b). 
\begin{figure}
		\includegraphics[width=0.5\textwidth]{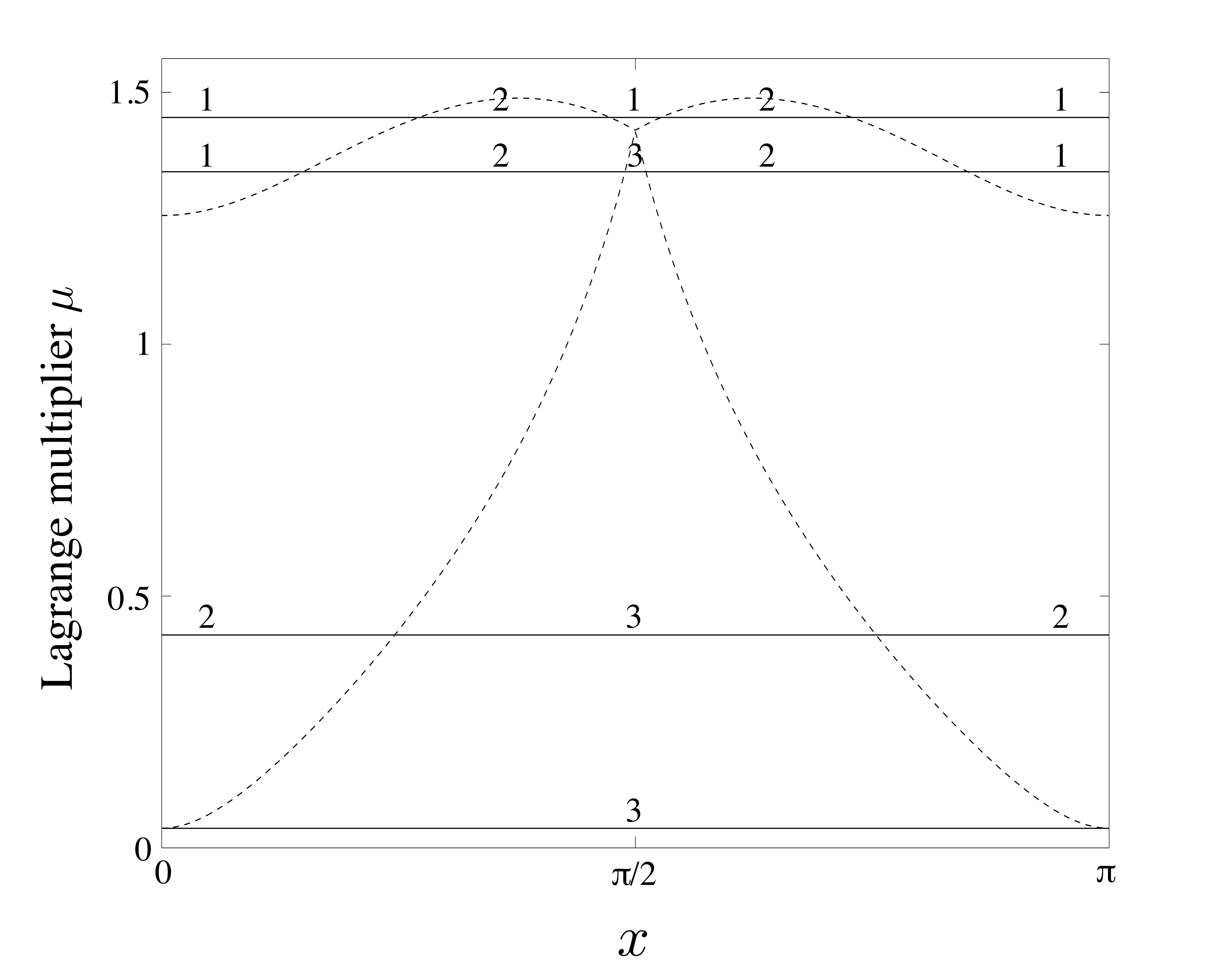}
	\caption{Functions $\mu_{\rm thr}(x)$ (lower dashed curve), $\mu_0(x)$ (upper dashed curve) and values for $\mu$ (solid bars) for different input energies $\lambda$, and noise parameters $\phi=0.85, N=1$. From top to bottom the values are $\mu=1.45 (\lambda = 1.006), 1.34 (\lambda = 1.04), 0.42 (\lambda = 3), 0.04 (\lambda = 35)$. The numbers indicate the intervals on the $x$-axis that belong to sets ${\mathcal N}_1,{\mathcal N}_2$ or ${\mathcal N}_3$.}
	\label{fig:mufixednoise}
\end{figure}
We see the naturally expected behavior of the capacity in Fig.~\ref{fig:capacityplot}. It decreases with increasing noise variance $N$ and increases with increasing noise correlations $\phi$. We note that the capacity increases with $\phi$ up to the noiseless capacity at ``full correlations'' ($\phi \rightarrow 1$). This limit will be discussed in section~\ref{sec:fullcorr}.
\begin{figure}
	\centering
		\includegraphics[width=0.5\textwidth]{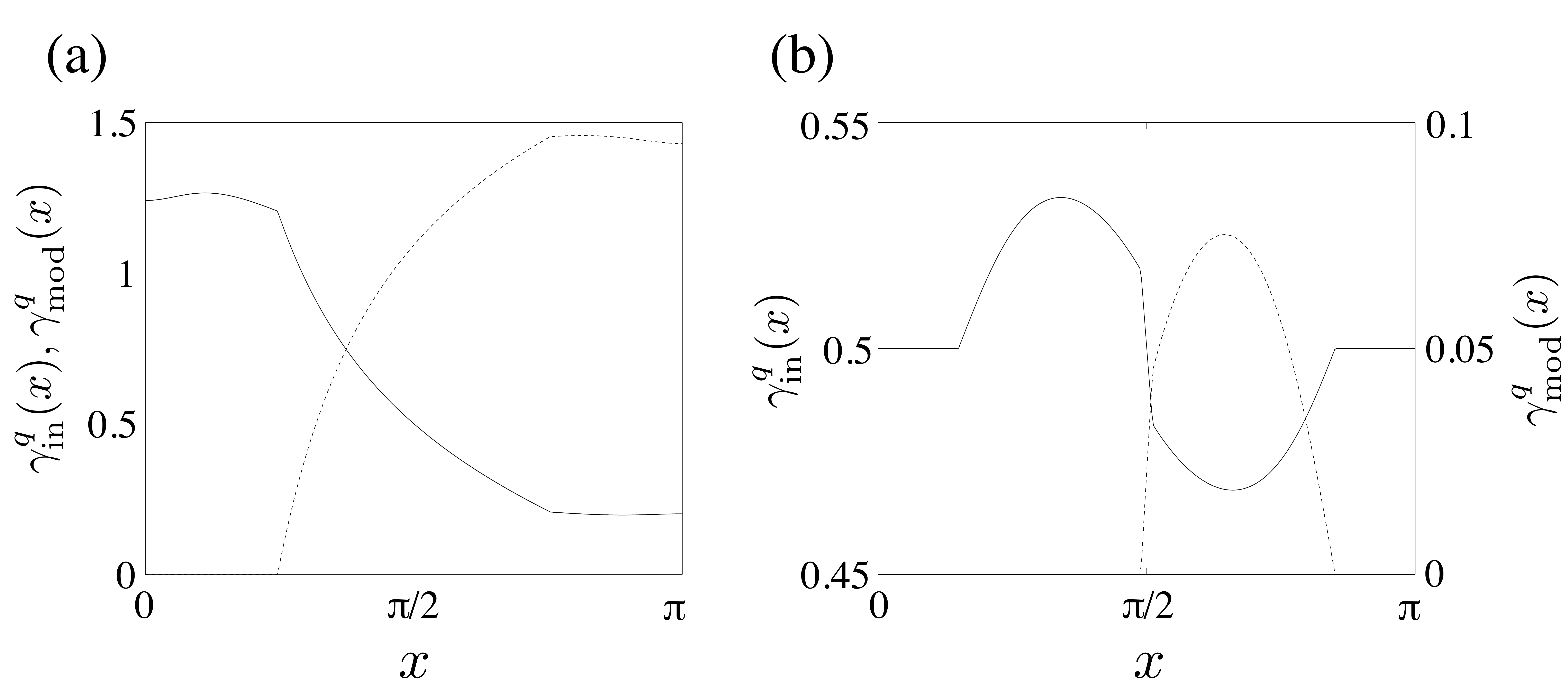}
	\caption{Optimal input $\gamma_{\rm in}^q(x)$ (solid curve) and modulation $\gamma_{\rm mod}^q(x)$ (dashed curve) eigenvalue spectra of the $q$ quadrature ($p$ quadrature spectra are the same but mirrored with respect to a vertical line at $\pi/2$) vs. spectral parameter $x$, for $\phi = 0.85, N = 1$ and $\lambda < \lambda_{\rm thr}$. The partitioning in sets is taken from Fig.~\ref{fig:mufixednoise}. In (a): $\lambda = 3$ which corresponds to $\mu = 0.42$, in (b): $\lambda = 1.04$ which corresponds to $\mu = 1.34$.}
	\label{fig:optinmod}
\end{figure}
\begin{figure}
	\centering		
		\includegraphics[width=0.5\textwidth]{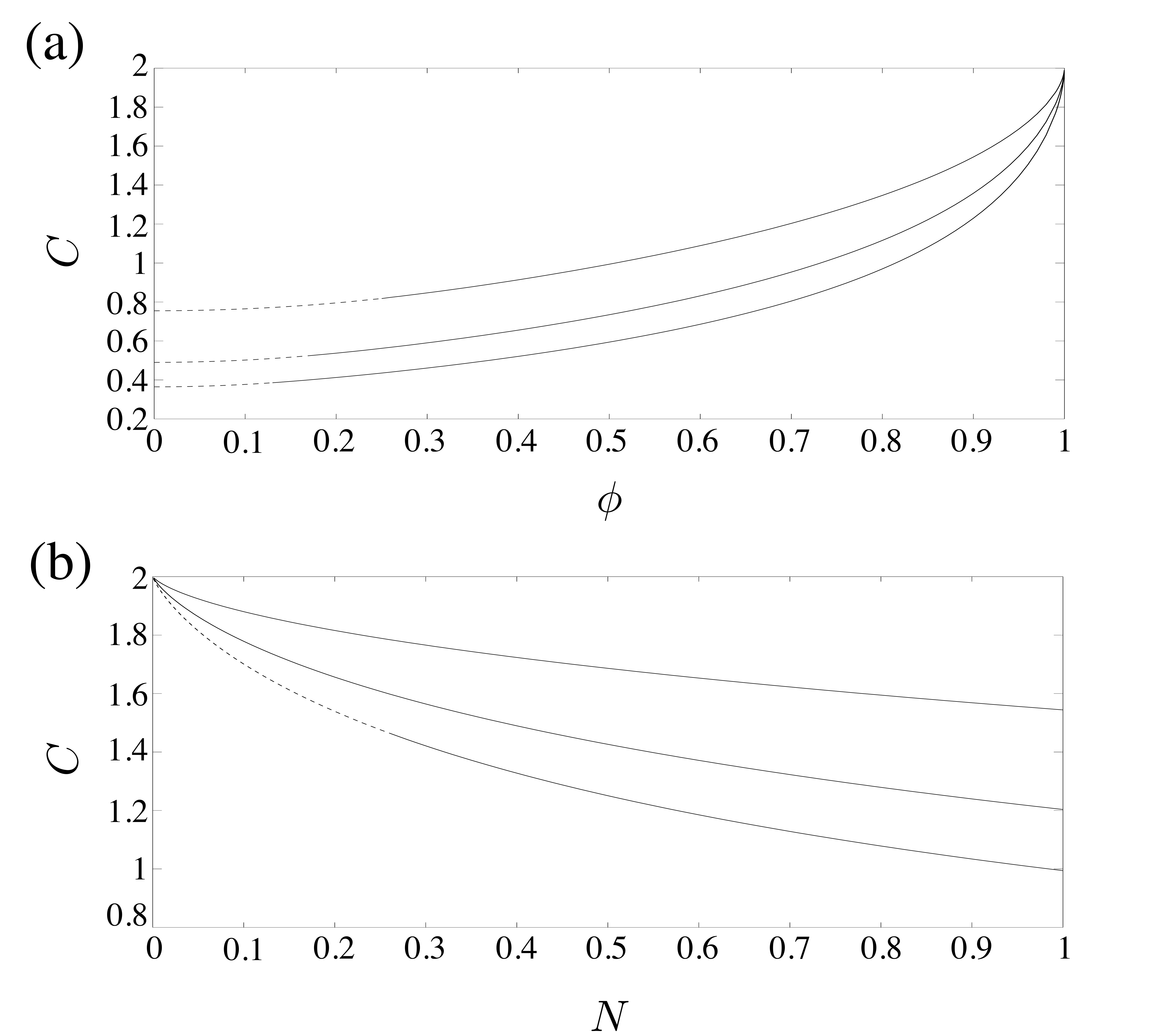}
	\caption{(a) Capacity $C$ (in bits) vs. correlation $\phi$, where from top to bottom $N = 1,2$ and $3$. (b) Capacity $C$ (in bits) vs. noise variance $N$, where from top to bottom $\phi = 0.9,0.7$ and $0.5$. The input energy is $\lambda = 3$ for both plots. The dashed part of the curves corresponds to the global water-filling solution with $\lambda > \lambda_{\rm thr}$. One observes that the capacity for full correlations $\phi \rightarrow 1$ tends to the capacity of the ideal noiseless channel $N=0$.}
	\label{fig:capacityplot}
\end{figure}

\subsection{Optimal quantum input state}
An important issue is to derive the covariance matrix of the optimal input state in the original ``correlated'' basis. We know that in the basis where the noise, modulation, and input matrices are diagonal, the optimal input spectrum corresponds to a product of one-mode squeezed states. By using general properties of Toepltiz matrices we conclude in Appendix \ref{sec:toep} that in the limit of an infinite number of modes the optimal input covariance matrix in the original basis is also Toeplitz and we found that its $k$th diagonal reads
\begin{equation}\label{eq:ginqpk}
	\gamma_{{\rm in},k}^{q,p} = \frac{1}{2 \pi}{\int\limits_0^{2\pi}{d x \, e^{i k x}} \, \gamma_{\rm in}^{q,p}(x)} , \quad k = 0,1,2,...,\infty.
\end{equation}
We can express $\gamma_{\rm in}^{q,p}(x)$ exactly in the case of global water-filling ($\lambda \geq \lambda_{\rm thr}$) which we consider for the rest of this subsection, that is,
\[
	\gamma_{\rm in}^{q,p}(x) = \frac{1}{2}\sqrt{\frac{\gamma_{\rm env}^{q,p}(x)}{\gamma_{\rm env}^{p,q}(x)}}.
\]
Inserting the definition of the noise spectrum of the Gauss-Markov channel \eqref{eq:genvqp} we deduce the spectra for the $q$ and $p$ quadrature of the optimal input matrix, i.e.,
\begin{equation}\label{eq:gin}
	\gamma^{q,p}_{{\rm in},k} = \frac{1}{4\pi}\int\limits_0^{2 \pi}{d x \, e^{i k x} \, \sqrt{\frac{1 + \phi^2 \pm 2 \phi \cos(x)}{1 + \phi^2 \mp 2 \phi \cos(x)}}},
\end{equation}
where the upper sign is for $q$ and the lower for $p$. In order to verify that the overall state is entangled we can check whether the reduced single mode states are mixed, i.e. whether for the reduced covariance matrix we have
\begin{equation}\label{eq:gin0}
	\det{\gamma_{\rm in}} = \gamma^{q}_{{\rm in},0}\gamma^{p}_{{\rm in},0} > \frac{1}{4}, \quad \phi > 0.
\end{equation}
Integration over the whole domain $0$ to $2\pi$ leads to $\gamma^{q}_{{\rm in},0} = \gamma^{p}_{{\rm in},0}$. Then we find for $\gamma^q_{{\rm in},0}(\phi = 0)$ = 1/2, which means that in the absence of correlations the optimal input state is a set of coherent states and not entangled. The limit of $\phi \rightarrow 1$ is unphysical because in this limit each single mode state becomes a thermal state with its temperature tending to infinity. This corresponds to an overall maximally entangled state. It is easy to show that \eqref{eq:gin0} is monotonically increasing from $\phi = 0$ to $\phi = 1$ and therefore we conclude that for all $\phi > 0$ the optimal input state is entangled.

In order to express the covariance matrix of the overall modulated output, let us recall that in the global water-filling case the overall modulated output eigenvalues are identical [${\overline{\gamma}}^{q,p}(x)$ is constant in $x$] and we can express them using Eq.~\eqref{eq:nuglobalwf} as ${\overline{\gamma}}^{q,p}(x) = {\overline{n}} + N + 1/2$.
Therefore, from Eq.~\eqref{eq:ginqpk} we easily see that only the main diagonal ($k=0$) has nonvanishing values and these values are identical. Then the covariance matrix $\overline{\gamma}$ is proportional to the identity matrix $I$, and therefore it is diagonal in the initial as well as in the rotated basis is ${\overline{\gamma}}^{q,p} = I({\overline{n}} + N + 1/2)$. This means that the sum of the optimal input and modulation covariance matrix has to cancel the correlations of the noise. 

\subsection{Full correlations}\label{sec:fullcorr}
We observe in Fig.~\ref{fig:capacityplot} that for fixed $N$ and $\lambda$ the higher the correlations are, the higher is the capacity. Furthermore, for $\phi \rightarrow 1$ the capacity tends to the capacity of the noiseless channel 
\begin{equation}\label{eq:Clim}
	\lim_{\phi \rightarrow 1} C = C_0 = g(\overline{n}),
\end{equation}
where the noiseless capacity $C_0$ was calculated in e.g. \cite{GGLMSY04}. Equation \eqref{eq:Clim} can be deduced by the following reasoning. For any $0 < \phi < 1$ the capacity $C$ is upper bounded by $C_0$. In addition, $C$ is lower bounded by the optimal transmission rate when using coherent states, which is denoted by $R$ in the following. Thus, we need to show that for $\phi \rightarrow 1$, both bounds fall together.

$R$ is easily calculated, as the restriction to coherent input states basically leads to the discussion of two classical multimode Gaussian channels with noises $\tilde{\gamma}^{q,p}_{\rm env}(x) = \gamma^{q,p}_{\rm env}(x) + 1/2$. Clearly, the solution of the optimization problem completely reduces now to the classical water-filling \cite{Cover} which determines the optimal modulation spectrum. For the noise spectrum of the Gauss-Markov channel \eqref{eq:genvqp} we find the global water-filling solution 
\begin{equation}\label{eq:Rat}
	\begin{split}
		R & = g(\overline{n} + N) -\frac{1}{\pi}\int\limits_0^\pi{d x \, g\left( \sqrt{\tilde{\gamma}^{q}_{\rm env}(x) \tilde{\gamma}^{p}_{\rm env}(x)} - \frac{1}{2}\right)},\\
		\overline{n} & \geq \frac{2\phi}{1-\phi} N.
	\end{split}
\end{equation}
Since for global water-filling the overall modulated output state is identical in $R$ and $C$ \eqref{eq:Cat} the difference to the capacity comes from the difference in the nonmodulated output only and is remarkably little. Indeed if we look at the output eigenvalue spectrum for $R$ in \eqref{eq:Rat} which reads 
\[
	\tilde{\nu}_{\rm out}(x) = \sqrt{\frac{1}{4} + \frac{\gamma_{\rm env}^q(x) + \gamma_{\rm env}^p(x)}{2} + \gamma_{\rm env}^q(x)\gamma_{\rm env}^p(x)}
\]
instead of 
\[
	\nu_{\rm out}(x) = \sqrt{\frac{1}{4} + \sqrt{\gamma_{\rm env}^q(x)\gamma_{\rm env}^p(x)} + \gamma_{\rm env}^q(x)\gamma_{\rm env}^p(x)}
\]
for $C$ \eqref{eq:Cat} we see that the two formulas simply differ by the terms which are the arithmetic mean of the noise eigenvalues $\gamma_{\rm env}^q(x)$, $\gamma_{\rm env}^p(x)$ in the first one and the geometric mean in the second one. As the geometric mean is always less or equal than the arithmetic mean one confirms that $C \geq R$.

Below the energy threshold one has to solve
\begin{equation}
	\frac{\pi - \alpha}{\pi}\gamma_{\rm env}^q(\alpha) = \frac{1}{\pi}\int\limits_\alpha^\pi{d x \gamma_{\rm env}^q(x)} + \overline{n}
\end{equation}
for $\alpha$ (depicted in Fig.~\ref{fig:markqwf} (b) but here with $\gamma_{\rm in}^{q,p}(x) = 1/2$) which is the $x$ value that defines the sets ${\mathcal N}_2$ and ${\mathcal N}_3$ when restricted to coherent states. For the found value of $\alpha$ we determine the optimal modulation eigenvalues 
\begin{equation}\label{eq:Roptmod}
	\begin{split}
		\tilde{\gamma}^q_{\rm mod}(x) & = \theta(x-\alpha)[\gamma^q_{\rm env}(\alpha) - \gamma^q_{\rm env}(x)],\\
		\tilde{\gamma}^p_{\rm mod}(x) & = \theta(\pi-\alpha-x)[\gamma^q_{\rm env}(\alpha) - \gamma^p_{\rm env}(x)], 		
	\end{split}
\end{equation}
where $\theta(x)$ is the Heaviside step function. By inserting $\tilde{\gamma}^{q,p}_{\rm in}(x) = 1/2, \tilde{\gamma}^q_{\rm mod}(x)$ and $\tilde{\gamma}^{q,p}_{\rm env}(x)$ in \eqref{eq:capacity} one obtains $R$ for $\overline{n} < 2N\phi/(1-\phi)$.

In the limit of full correlations $\phi \rightarrow 1$ the noise spectra $\gamma_{\rm env}^{q,p}(x)$ tend to 0 for $0 < x < \pi$ and to infinity for $x = 0$ (for the $q$-spectrum) and $x = \pi$ (for the $p$-spectrum). Due to the finite energy of the noise, $\frac{1}{\pi}\int_0^\pi{d x \gamma_{\rm env}^{q,p}(x) = N}$, these functions become delta-like distributions. In this limit $\alpha \rightarrow 0$ and the solution to $R$ is given by a classical water-filling over a vacuum noise spectrum with infinite edges which though can be shown to give an infinitesimally small contribution and therefore can be neglected. Thus \eqref{eq:Clim} is proven. The same result was obtained in \cite{LMM09} for a channel with additive Markov noise, which becomes a collection of thermal channels when the noise is diagonalized.
\subsection{How useful are the optimal input states?}
In this subsection we evaluate the gain $G$ from the use of the optimal input states with respect to coherent product states for the Gauss-Markov channel for two modes and an infinite number of modes. Our motivation here is that the optimal input states are entangled and therefore may be not easy to generate. On the contrary, coherent states are easily accessible in the laboratory by standard tools of quantum optics. The gain $G$ is given by the ratio of the capacity $C$ to the optimal transmission rate using coherent states $R$ (discussed in section \ref{sec:fullcorr})
\begin{equation}
	G \equiv \frac{C}{R}.
\end{equation}
The gain was already discussed in \cite{CCMR05} for the case of a two mode additive channel which is identical to our Gauss-Markov channel with noise covariance matrix \eqref{eq:marknoise} taking $n = 2$. We remark here, that the capacity of two modes with correlations $\phi$ (in $q$ and $p$, but no $q-p$ correlations) is identical to the capacity of the mono-modal phase dependent channel discussed in section \ref{sec:onemode}, because in the two-mode case the diagonalized noise spectrum leads to two phase-dependent monomodal channels with inverse variance in $q$ and $p$. As it was shown \cite{CCMR05} the gain for such a channel exhibits a maximum with respect to the input energy constraint $\overline{n}$ for fixed signal-to-noise ratio 
\[
	\mathrm{SNR} = \frac{\overline{n}}{N}
\]
and correlation $\phi$. Furthermore, it was deduced that the gain increases with increasing correlations $\phi$ between the two modes. 

In the case of an infinite number of modes, we know already that in the absence of correlations the optimal input states are coherent states, and therefore there is no gain ($G=1$). For full correlations, the behavior is essentially different from the two-mode case: since the channel becomes effectively noiseless, coherent input states are optimal in this limit as well, whereas for two modes the highest available squeezing is best. Therefore, an interesting question is where we find the maximum gain with respect to the noise correlations in the limit of an infinite number of uses of the channel. In Fig.~\ref{fig:markGainCompvsnmax} 
\begin{figure}
	\centering
		\includegraphics[width=0.5\textwidth]{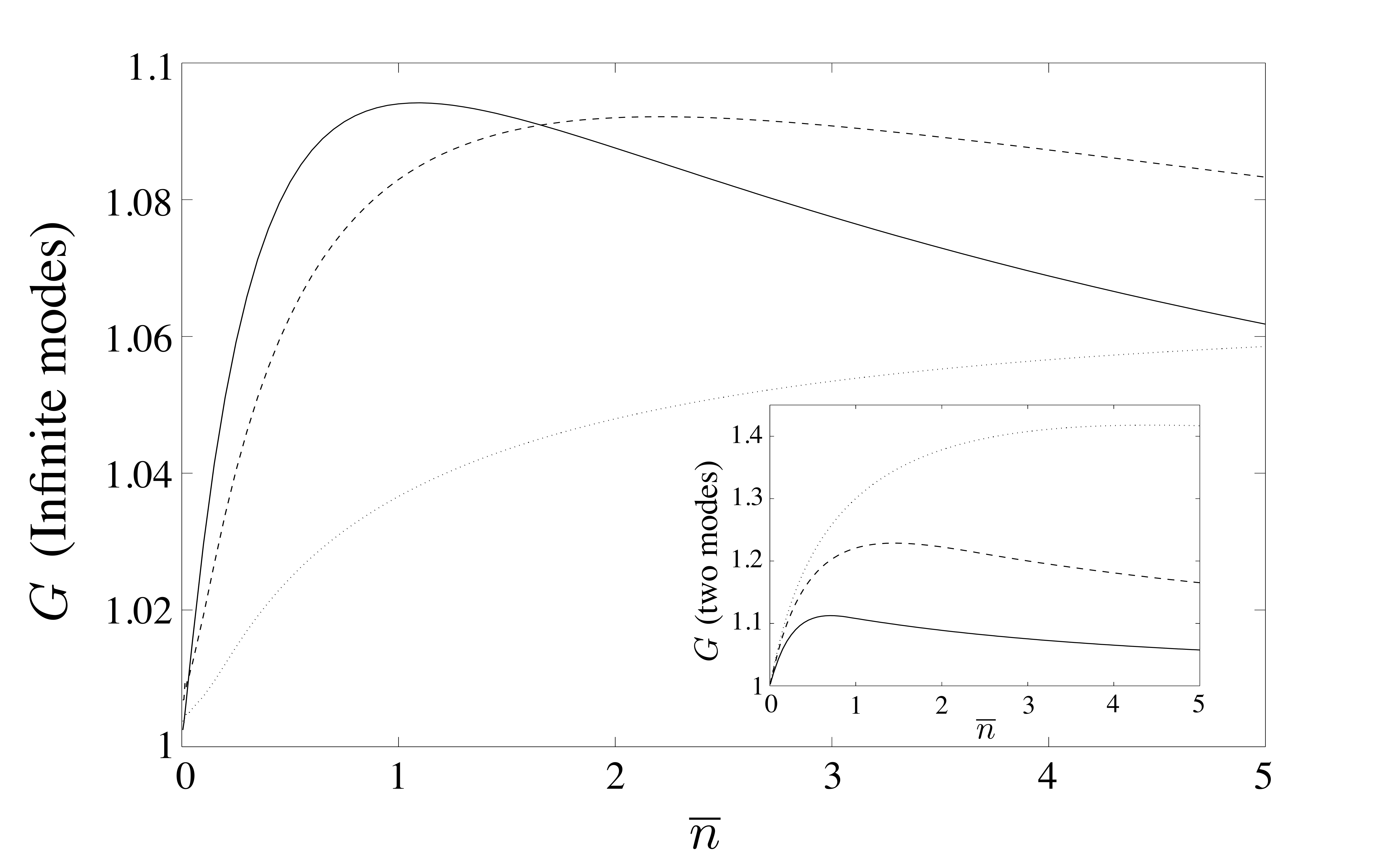}		
	\caption{Gain $G$ vs. $\overline{n}$ for an infinite number of modes and for two modes (inset). For both plots we took SNR = 3 and $\phi = 0.7$ (solid curve), $\phi = 0.9$ (dashed curve), $\phi = 0.99$ (dotted curve).}
	\label{fig:markGainCompvsnmax}
\end{figure}
\begin{figure}
	\centering
		\includegraphics[width=0.5\textwidth]{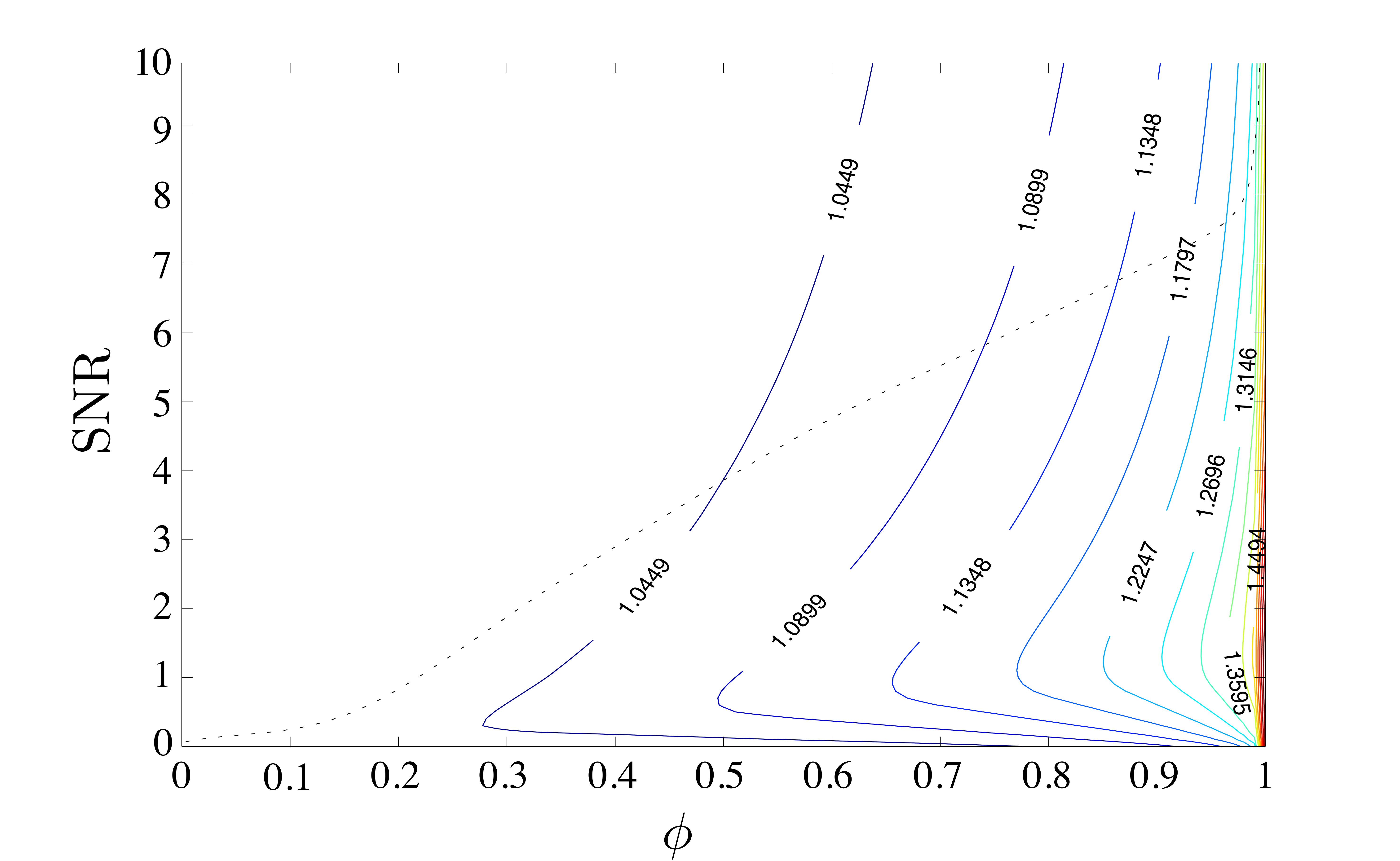}
	\caption{(Color online) Contour plot of the maximal gain $\max_{\overline{n}}G$ vs. $\phi$, SNR for two modes. In the area above the dotted line the quantum water-filling solution holds.}
	\label{fig:monGainContourphiSNR}
\end{figure}
\begin{figure}
	\centering
		\includegraphics[width=0.5\textwidth]{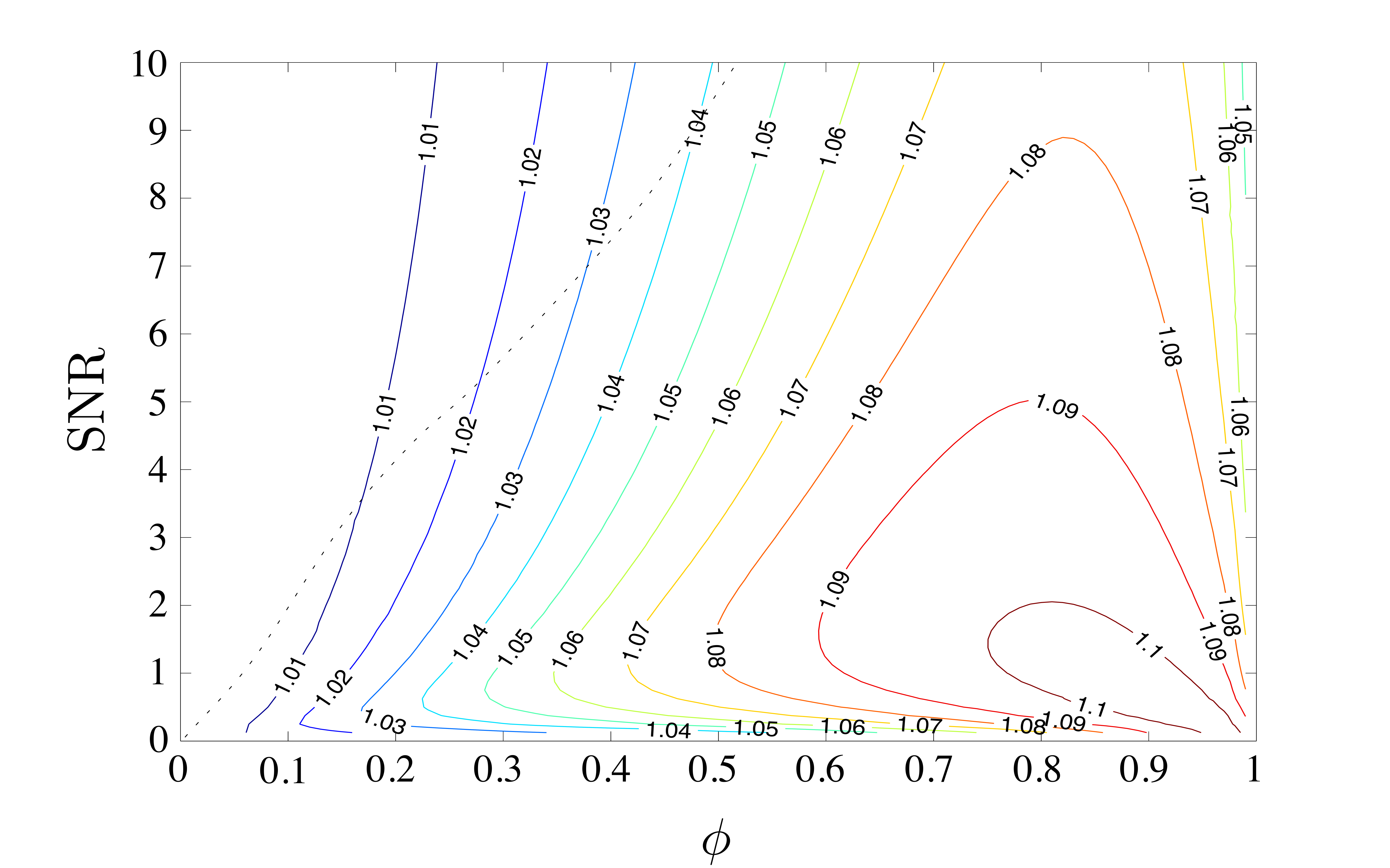}
	\caption{(Color online) Contour plot of the maximal gain $\max_{\overline{n}}G$ vs. $\phi$, SNR for an infinite number of modes. In the area left to the dotted line the global quantum water-filling solution holds.}
	\label{fig:markGainContourphiSNR}
\end{figure}
we plotted the gain $G$ vs. $\overline{n}$ for fixed SNR and different $\phi$ for an infinite number of modes and for two modes. We see that unlike in the case of two modes, where the gain with higher correlations is always higher, in the case of an infinite number of modes the maximum of the gain is found for some intermediate correlations. However, in this plot one does not see the dependency on the SNR. So the question that follows is: What is the dependence of the maximal gain (with respect to $\overline{n}$) on $\phi$ and the SNR? In order to answer this question we make a contour plot of $\max_{\overline{n}}{G}$ vs. $\phi$ and SNR. In the case of two modes we see in Fig.~\ref{fig:monGainContourphiSNR} that the optimal gain is obtained at $\phi = 1$ for a certain SNR. In addition, in this case, the increase in gain at high correlations is very strong compared to that at lower correlations. Furthermore, a low SNR seems to benefit from entanglement more than a higher SNR.

For an infinite number of modes the situation is different, as we can see in Fig.~\ref{fig:markGainContourphiSNR}: instead of a sharp edge toward high correlations we see an almost-flat area of maximal gain in the region of high correlations and low SNR. This holds, on one hand, for a high correlation and low SNR but, on the other hand, also for less correlated noise and a higher SNR. Furthermore, the enhancement is rather robust and does not drop as sharply with decreasing correlations as in the two- mode case. However, as the region of high gain has input energies below the global water-filling threshold, where the optimal input squeezing becomes quite complex [as depicted, e.g., in Fig.~\ref{fig:optinmod} (b)] a modulation of coherent states might be practically more favorable, because it is already quite efficient as the gain due to entanglement does not exceed 10\%.

\section{Conclusions}\label{sec:conclusions}
We have presented an algorithm for calculation of the classical Gaussian capacity of the Gaussian channel with additive correlated noise. This method is applicable to all cases where there are no $q-p$ correlations in the noise covariance matrix and, moreover, the $q$ and $p$ blocks commute at least asymptotically in the limit of an infinite number of uses. This applies, in particular, to the whole class of channels in which noise correlations are given by a stationary Gauss process.

We applied this method to a channel with a Gauss-Markov noise as defined in \cite{SDKC09}, which has asymptotically commuting block matrices. We found that in the limit of full correlations the capacity tends to the noiseless capacity. We calculated the covariance matrix of the optimal input state not only in the eigenbasis of the noise covariance matrix but also in the original, correlated basis. In this correlated basis the optimal input state is entangled and we found that the degree of entanglement scales with the correlation parameter of the noise from no entanglement (i.e., a set of coherent states) to a maximally entangled state.

Furthermore, we discussed the gain from using optimal entangled input states with respect to coherent product states in the case of two modes and an infinite number of modes. We found that, contrarily to the two-mode case, where the gain always strongly increases with correlations for any SNR, for an infinite number of modes a high gain is achieved in a region of high correlations and low SNR and of lower correlations and higher SNR. In addition, the gain in the limit of an infinite number of modes does not drop as sharply with decreasing correlations as in the two-mode case. We also observed that a Gaussian modulated coherent-state encoding already achieves not less than 90\% of the Gaussian capacity.

\begin{acknowledgements}
The authors acknowledge the financial support from the EU under the FP7 project COMPAS,
from the Belgian federal government via the IAP research network Photonics$@$be,
from the Brussels Capital Region under the project CRYPTASC and from
the Belgian foundation FRIA. We thank Oleg Pilyavets for fruitful discussions and many comments on the paper and Stefano Mancini for pointing out useful references.
\end{acknowledgements}

\appendix
\section{Results for one mode}\label{sec:lagrange}
In the following we present the solution via the Lagrange multipliers method for the optimization problem for the one-mode channel introduced in section \ref{sec:optproblem}.
\subsection{Search for the extremum}\label{sec:lagrangeqwf}
The extremum of the Lagrangian ${\mathcal L}$ defined in \eqref{eq:Lone} must satisfy
\[
	\nabla \, {\mathcal L} = 0,
\]
where
\[
	\nabla = \left(\frac{\partial}{\partial\gamma^{q}_{\rm in}},\frac{\partial}{\partial\gamma^{p}_{\rm in}},\frac{\partial}{\partial\gamma^{q}_{\rm mod}},\frac{\partial}{\partial\gamma^{p}_{\rm mod}},\frac{\partial}{\partial\gamma^{qp}_{\rm in}},\frac{\partial}{\partial\gamma^{qp}_{\rm mod}}\right)^\mathrm{T}.
\]
This corresponds to six equations:
\begin{eqnarray}
	\kappa(\overline{\nu}) \, \overline{\gamma}^p - \kappa(\nu_{\rm out}) \, \gamma^p_{\rm out} - \mu - \tau \gamma_{\rm in}^p & = & 0,\label{eq:6eqinq}\\
	\kappa(\overline{\nu}) \, \overline{\gamma}^q - \kappa(\nu_{\rm out}) \, \gamma^q_{\rm out} - \mu - \tau \gamma_{\rm in}^q & = & 0,\label{eq:6eqinp}\\
	\kappa(\overline{\nu}) \, \overline{\gamma}^p - \mu & = & 0,\label{eq:6eqmodq}\\
	\kappa(\overline{\nu}) \, \overline{\gamma}^q - \mu & = & 0,\label{eq:6eqmodp}\\
	-\kappa(\overline{\nu}) \, (\gamma_{\rm in}^{qp} + \gamma_{\rm mod}^{qp}) + \kappa(\nu_{\rm out}) \, \gamma_{\rm in}^{qp} + \tau \gamma_{\rm in}^{qp} & = & 0,\label{eq:6eqinqp}\\
	-\kappa(\overline{\nu}) \, (\gamma_{\rm in}^{qp} + \gamma_{\rm mod}^{qp}) & = & 0 \label{eq:6eqmodqp},
\end{eqnarray}
where $\mu, \tau$ are Lagrange multipliers and
\[
	\kappa(x) = \frac{g'(x-\frac{1}{2})}{2 x}.
\]
From Eqs.~\eqref{eq:6eqmodq} and \eqref{eq:6eqmodp} we derive Eqs.~\eqref{eq:oneqwf} and \eqref{eq:oneqwfmu}. Since $\kappa(x) > 0$ for all $x > 1/2$ we find from Eq.~\eqref{eq:6eqmodqp} that $\gamma_{\rm in}^{qp} = -\gamma_{\rm mod}^{qp}$. Therefore, Eq.~\eqref{eq:6eqinqp} simplifies to
\begin{equation}
	\gamma_{\rm in}^{qp}[\kappa(\nu_{\rm out}) + \tau] = 0.
\end{equation}
If one assumes that $\gamma_{\rm in}^{qp} \neq 0$ then the resulting multiplier $\tau$ leads to a contradiction when inserted in Eqs.~\eqref{eq:6eqinq} and \eqref{eq:6eqinp}. Thus, we conclude that $\gamma_{\rm in}^{qp} = \gamma_{\rm mod}^{qp} = 0$. Finally, by combining equations \eqref{eq:6eqinq}-\eqref{eq:6eqmodq} one deduces Eq.~\eqref{eq:oneoptin}. 

In order for the solutions to be physical the modulation eigenvalues $\gamma_{\rm mod}^{q}, \gamma_{\rm mod}^{p}$ have to be positive, which is the case for an input energy above the energy threshold \eqref{eq:onethr}. For such $\lambda$ we will show now that ${\mathcal L}$ is concave on the solution which proves that we found indeed a local maximum. For fixed $\lambda$, we can prove concavity by checking whether the Hessian $H({\mathcal{L}})$, i.e. the $6 \times 6$ matrix that contains the second derivates of $\mathcal L$ with respect to all variables $\gamma^{q,p}_{\rm in},\gamma^{q,p}_{\rm mod},\gamma^{qp}_{\rm in,mod}$, is negative definite. First, we find that
\[
	  \frac{\partial^2{\mathcal L}}{\partial \gamma^{qp}_{\rm s}\partial \gamma^{r}_{\rm s}} \Big |_{\gamma^{qp}_{\rm in}=0,\gamma^{qp}_{\rm mod}=0} = 0, \quad r=q,p; \; {\rm s = \rm in, \rm mod},
\]
which correspond to the derivatives of the left hand sides of Eqs.~\eqref{eq:6eqinqp} and \eqref{eq:6eqmodqp} with respect to $\gamma^{q,p}_{\rm in},\gamma^{q,p}_{\rm mod}$. Therefore, we cab write the Hessian $H({\mathcal L})$ in the block form
\[
	H({\mathcal{L}}) = 
	\begin{pmatrix}
	H_{\rm var}({\mathcal{L}}) & 0\\
	0 & H_{\rm cov}({\mathcal{L}})
	\end{pmatrix},
\]
where $H_{\rm var}({\mathcal{L}})$ contains only second derivatives with respect to $\gamma^{q,p}_{\rm in},\gamma^{q,p}_{\rm mod}$ and $H_{\rm cov}({\mathcal{L}})$ second derivatives with respect to $\gamma^{qp}_{\rm in,mod}$. By using $\gamma_{\rm mod}^{qp}=\gamma_{\rm in}^{qp}=0$ and Eqs.~\eqref{eq:6eqinq}, \eqref{eq:6eqinp} we find 
\begin{equation}\label{eq:Hcov}
	H_{\rm cov}({\mathcal{L}}) = -
	\begin{pmatrix}
	A+B & A\\
	A & A
	\end{pmatrix},
\end{equation}
where $A = \kappa(\overline{\nu}), B = \kappa(\nu_{\rm out})c$, with
\begin{equation}
	c = 2\sqrt{\gamma_{\rm env}^q\gamma_{\rm env}^p} \, \frac{\gamma_{\rm env}^q + \gamma_{\rm env}^p}{\gamma_{\rm env}^q - \gamma_{\rm env}^p}.
\end{equation}
The eigenvalues of \eqref{eq:Hcov} read
\begin{equation}\label{eq:hqp12}
	h^{qp}_{1,2} = -A -\frac{B}{2} \pm \sqrt{A^2 + \frac{B^2}{4}},
\end{equation}  
which are both negative, since $A, B > 0$.

Now we show that the eigenvalues of $H_{\rm var}({\mathcal{L}})$ are also negative. However, instead of considering the Hessian $H_{\rm var}({\mathcal L})$ we consider equivalently the Hessian $H_{\rm var}({\mathcal \chi})$, where we embedded constraints \eqref{eq:purity}, \eqref{eq:onelambda} in $\chi$, which becomes then a function of only two variables $\gamma_{\rm in}^q,\gamma_{\rm mod}^q$ and thus $H_{\rm var}({\mathcal \chi})$ is a $2 \times 2$ matrix. Then we find that $H_{\rm var}({\mathcal \chi})$ has the same shape as \eqref{eq:Hcov}, where now
\begin{equation}
	A = \frac{g'\left(\overline{\nu} - \frac{1}{2} \right)}{\overline{\nu}}, \quad B = \frac{g'\left(\nu_{\rm out} - \frac{1}{2} \right)\gamma_{\rm env}^q}{\nu_{\rm out}4(\gamma_{\rm in}^q)^3}.
\end{equation}
Since $A,B > 0$ we conclude again that both eigenvalues [that read like \eqref{eq:hqp12}] are negative. Therefore, the total Hessian $H({\mathcal{L}})$ on the solution is negative definite, which proves the concavity of ${\mathcal{L}}$ at the extremal point for an input energy above the threshold \eqref{eq:onethr}. Thus, we conclude that we have found a local maximum of ${\mathcal{L}}$.

\subsection{Below the threshold}\label{sec:lagrangenowf}
If the input energy is below the threshold \eqref{eq:onethr} then the extremum of the Lagrangian lays outside the physical region. In this case the maximum lays on its boundary which corresponds to one or both modulation eigenvalues being 0. If both modulation eigenvalues are 0 no information is transmitted which is clearly not the optimum for an input energy $\lambda > 1$. 
Now we put $\gamma_{\rm mod}^q = 0$ which replaces Eq.~\eqref{eq:6eqmodq} (if we put instead $\gamma_{\rm mod}^p = 0$ we would obtain in the following the same set of equations up to a permutation of indexes $p$ and $q$). Therefore, this degree of freedom does no longer exist and there is no $q-p$ correlation in the modulation: $\gamma_{\rm mod}^{qp} = 0$. Now, we have to find the extremum of the Lagrangian again, where the new gradient reads
\[
	\nabla = \left(\frac{\partial}{\partial\gamma^{q}_{\rm in}},\frac{\partial}{\partial\gamma^{p}_{\rm in}},\frac{\partial}{\partial\gamma^{p}_{\rm mod}},\frac{\partial}{\partial\gamma^{qp}_{\rm in}}\right)^\mathrm{T}.
\]
The previous equation, \eqref{eq:6eqinqp}, is now simplified to
\[
	-\gamma_{\rm in}^{qp}[\kappa(\overline{\nu}) - \kappa(\nu_{\rm  out}) - \tau] = 0.
\]
Again, if one takes $\gamma_{\rm in}^{qp} \neq 0$ then the solution for $\tau$ when inserted into Eqs.~\eqref{eq:6eqinq} and \eqref{eq:6eqinp} leads to a contradiction, namely $\gamma_{\rm mod}^p < 0$, which is unphysical. Therefore, we conclude that 
\begin{equation}\label{eq:ginqp}
	\gamma_{\rm in}^{qp} = 0.
\end{equation}

By inserting \eqref{eq:6eqmodp} into \eqref{eq:6eqinq} one obtains $\tau$ which when inserted into \eqref{eq:6eqinp} leads to the transcendental equation \eqref{eq:onebt}. The Lagrange multiplier $\mu$ is given by \eqref{eq:6eqmodp} leading to Eq.~\eqref{eq:onebtmu}.

From Eq.~\eqref{eq:onebt} we can deduce a bound on the optimal input squeezing. From the fact that
\[
	\frac{g'(\overline{\nu} - \frac{1}{2})}{\overline{\nu}} \leq \frac{g'(\nu_{\rm out} - \frac{1}{2})}{\nu_{\rm out}},
\]
as $1/x \, g'(x-1/2)$ is a monotonically decreasing function and $\overline{\nu} > \nu_{\rm out}$, it follows that 
\begin{equation}\label{eq:ineq1}
	|\overline{\gamma}^p - \overline{\gamma}^q| \ge |\gamma_{\rm env}^p - \frac{\gamma_{\rm env}^q}{4 (\gamma_{\rm in}^q)^2}|,
\end{equation}
where the equal sign holds if both left and right hand side are 0 and the expressions inside the absolute values on both sides have the same sign. When $\lambda < \lambda_{\rm thr}$ they cannot be 0, because otherwise $\gamma_{\rm in}^q = 1/2\sqrt{\gamma_{\rm env}^q/\gamma_{\rm env}^p}$ and $\overline{\gamma}^p = \overline{\gamma}^q$ leading to $\lambda = \lambda_{\rm thr}$ which contradicts to our assumption that we are below the threshold. This proves that
\begin{equation}\label{eq:gbarqnegbarp}
	\overline{\gamma}^q \ne \overline{\gamma}^p.
\end{equation}

Let us remind that up to here the obtained results are invariant (up to the permutation of indices $q$ and $p$) with respect to the choice which modulation eigenvalue is set to 0.

Now we prove that the choice which modulation eigenvalue has to be set to 0 is determined by the relation between $\gamma_{\rm env}^{q}$ and $\gamma_{\rm env}^{p}$.
\begin{lemmodq}\label{lemmmodq}
	For an input energy $\lambda < \lambda_{\rm thr}$ under the condition $\gamma_{\rm env}^{q} > \gamma_{\rm env}^{p}$ the maximum of $\chi$ given by Eq.~\eqref{eq:chi} is achieved when 
	\begin{equation}
		\gamma_{\rm mod}^q = 0.
	\end{equation}
\end{lemmodq}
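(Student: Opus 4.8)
The plan is to exploit the fact that the two candidate boundary solutions---setting $\gamma_{\rm mod}^q=0$ or $\gamma_{\rm mod}^p=0$---are exchanged by the relabeling $q\leftrightarrow p$, and then to show that only $\gamma_{\rm mod}^q=0$ is compatible with the physical ordering $\gamma_{\rm env}^q>\gamma_{\rm env}^p$. From the discussion preceding the lemma we already know that, for $\lambda<\lambda_{\rm thr}$, the constrained maximum of $\chi$ cannot lie in the interior (the water-filling extremum would force a negative modulation eigenvalue) and must therefore sit on the boundary of the physical region, where exactly one modulation eigenvalue vanishes. Hence it suffices to compare the maximal value of $\chi$ on the face $\{\gamma_{\rm mod}^q=0\}$ with that on the face $\{\gamma_{\rm mod}^p=0\}$, the stationarity condition on each being \eqref{eq:onebt} with the indices swapped.

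First I would reduce the choice of face to a statement about the nonmodulated output. From \eqref{eq:chancov} the symplectic eigenvalue $\nu_{\rm out}$ depends only on the input squeezing and the noise, not on the modulation, while the energy constraint fixes the total output energy $\overline{\gamma}^q+\overline{\gamma}^p=\overline{\lambda}$ independently of how the modulation is split. Consequently, for a fixed pure input state the maximization of $\chi=g(\overline{\nu}-\tfrac12)-g(\nu_{\rm out}-\tfrac12)$ over the modulation split reduces, since $g$ is monotonically increasing, to maximizing $\overline{\nu}^2=\overline{\gamma}^q\overline{\gamma}^p$ at fixed sum $\overline{\lambda}$ subject to $\overline{\gamma}^{q,p}\geq\gamma_{\rm out}^{q,p}$. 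As a function of $\overline{\gamma}^q$ this is a downward parabola peaked at $\overline{\lambda}/2$, so below threshold its constrained optimum places all available modulation energy on the quadrature with the smaller $\gamma_{\rm out}$. The vanishing modulation eigenvalue is therefore the one of the \emph{larger}-$\gamma_{\rm out}$ quadrature, and the lemma follows once we establish that the optimal input satisfies $\gamma_{\rm out}^q\geq\gamma_{\rm out}^p$, i.e. that the noisier quadrature is antisqueezed, $\gamma_{\rm in}^q\geq\tfrac12\geq\gamma_{\rm in}^p$.

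To secure this last point I would show that the competing face $\{\gamma_{\rm mod}^p=0\}$ hosts no admissible stationary solution. Using the purity constraint \eqref{eq:purity} to write $\nu_{\rm out}^2=\tfrac14+\gamma_{\rm env}^q\gamma_{\rm env}^p+\gamma_{\rm in}^q\gamma_{\rm env}^p+\gamma_{\rm env}^q/(4\gamma_{\rm in}^q)$, one sees that the subtracted output-entropy term is minimized at $\gamma_{\rm in}^q=\tfrac12\sqrt{\gamma_{\rm env}^q/\gamma_{\rm env}^p}>\tfrac12$, so the nonmodulated output alone strongly favors antisqueezing $q$. Repeating the derivation of Sec.~\ref{sec:noqwf} and Appendix~\ref{sec:bounds} with $q$ and $p$ interchanged turns \eqref{eq:onebt} and \eqref{eq:ineq1} into their mirror images and yields the bound $\tfrac12\leq\gamma_{\rm in}^p<\tfrac12\sqrt{\gamma_{\rm env}^p/\gamma_{\rm env}^q}$; since $\gamma_{\rm env}^p<\gamma_{\rm env}^q$ the upper bound lies strictly below $\tfrac12$, the interval is empty, and no physical root of the mirrored stationarity equation exists. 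The maximum must then be attained on $\{\gamma_{\rm mod}^q=0\}$, which is the claim. A useful consistency check is that $\gamma_{\rm in}^q\to0$ (strong squeezing of the noisier quadrature) sends the variable part $\gamma_{\rm in}^q\gamma_{\rm env}^p+\gamma_{\rm env}^q/(4\gamma_{\rm in}^q)$ to infinity, so $\chi\to-\infty$ and that regime is never optimal.

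The step I expect to be the main obstacle is precisely this ruling out of the competing face. The difficulty is that the two terms of $\chi$ pull in opposite directions: $g(\nu_{\rm out}-\tfrac12)$ is minimized by antisqueezing $q$, whereas $g(\overline{\nu}-\tfrac12)$ can locally favor \emph{squeezing} $q$, because lowering $\gamma_{\rm out}^q$ makes the two output quadratures easier to balance within a fixed modulation budget. This is why a naive energy-preserving reflection $\gamma_{\rm in}^q\leftrightarrow\gamma_{\rm in}^p$ with the modulation moved to the other quadrature does not settle the question: a short computation shows $\overline{\nu}^2$ changes by $(\gamma_{\rm env}^q-\gamma_{\rm env}^p)\bigl(\gamma_{\rm mod}-(\gamma_{\rm in}^q-\gamma_{\rm in}^p)\bigr)$, which below threshold can be negative even though $\nu_{\rm out}$ strictly decreases. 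The conclusive argument is therefore the face-inadmissibility (empty-interval) route above, supported if necessary by the monotonic dependence of the mirrored solution on $\lambda$ established in Appendix~\ref{sec:mulambda}; a direct numerical comparison of the two fully optimized faces would provide an independent confirmation.
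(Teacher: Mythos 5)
Your step 1 (the fixed-input rebalancing argument) is sound, and it is in fact the paper's own first case: the paper perturbs a putative optimum having $\gamma_{\rm mod}^p=0$ and $\overline{\gamma}^q>\overline{\gamma}^p$ by shifting a small amount of modulation from $q$ to $p$ and invokes the arithmetic--geometric-mean comparison; your parabola formulation is the same argument. The genuine gap is in step 2, where you rule out the competing face by asserting that ``repeating the derivation of Sec.~\ref{sec:noqwf} and Appendix~\ref{sec:bounds} with $q$ and $p$ interchanged'' yields the two-sided bound $\tfrac12\leq\gamma_{\rm in}^p<\tfrac12\sqrt{\gamma_{\rm env}^p/\gamma_{\rm env}^q}$. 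This inference is not available as stated. First, the bound \eqref{eq:ginq} is not derived in the paper from \eqref{eq:onebt} and \eqref{eq:ineq1} alone: it is presented as a consequence of Lemma~\ref{lemmmodq} itself, and everything in Appendix~\ref{sec:bounds} (in particular \eqref{eq:ineq2} and \eqref{eq:gbarp}) is established downstream of the lemma; invoking mirror images of these results to prove the lemma is circular. Second, the problem is not invariant under $q\leftrightarrow p$: the hypotheses single out $q$ as the noisier quadrature, and the swapped threshold $\sqrt{\gamma_{\rm env}^p/\gamma_{\rm env}^q}+\gamma_{\rm env}^p-\gamma_{\rm env}^q$ is smaller than $1$, so the swapped below-threshold hypothesis is vacuous; the step of the lower-bound ($\gamma_{\rm in}\geq\tfrac12$) derivation that uses $\gamma_{\rm env}^q>\gamma_{\rm env}^p$ simply does not transfer to the competing face. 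Consequently your ``empty interval'' conclusion, and the claim that ``no physical root of the mirrored stationarity equation exists,'' are unsupported: what excludes the competing face is not that the mirrored equation \eqref{eq:onebt} has no solutions (its sign condition is perfectly satisfiable, e.g.\ by near-vacuum configurations with $\overline{\gamma}^q>\overline{\gamma}^p$), but that its solutions are incompatible with the rebalancing optimality of your own step 1.

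The repair is short and is exactly the paper's second case, so neither the lower bound nor the empty interval is needed. Suppose the optimum sits on the face $\{\gamma_{\rm mod}^p=0\}$. Your step 1 forces $\overline{\gamma}^p>\overline{\gamma}^q$ (equality being excluded as in \eqref{eq:gbarqnegbarp}). The mirrored stationarity condition equates the sign of $\overline{\gamma}^q-\overline{\gamma}^p$ with that of $\gamma_{\rm env}^q-4(\gamma_{\rm in}^q)^2\gamma_{\rm env}^p$ (after eliminating $\gamma_{\rm in}^p$ via the purity constraint \eqref{eq:purity}), hence $\gamma_{\rm in}^q>\tfrac12\sqrt{\gamma_{\rm env}^q/\gamma_{\rm env}^p}>\tfrac12>\gamma_{\rm in}^p$. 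But then every term of $\overline{\gamma}^q=\gamma_{\rm in}^q+\gamma_{\rm env}^q+\gamma_{\rm mod}^q$ strictly dominates the corresponding term of $\overline{\gamma}^p=\gamma_{\rm in}^p+\gamma_{\rm env}^p$, so $\overline{\gamma}^q>\overline{\gamma}^p$, contradicting step 1. This closes the proof using only ingredients you already have, and it is precisely how the paper's proof concludes.
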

\begin{proof}
Let us assume that in contrary to the statement of the lemma we have $\gamma_{\rm mod}^p = 0$. We know that due to Eq.~\eqref{eq:gbarqnegbarp} we have two possible cases: $\overline{\gamma}^q > \overline{\gamma}^p$ or $\overline{\gamma}^q < \overline{\gamma}^p$. 

Let us assume first that $\overline{\gamma}^q > \overline{\gamma}^p$ and that we have found an optimal solution. Suppose we remove a fraction of $\gamma_{\rm mod}^q$ which is smaller then half of the difference $\overline{\gamma}^q - \overline{\gamma}^p$ and set $\gamma_{\rm mod}^p$ equal to this fraction. This will not change the input energy as well as the output entropy (second term in $\chi$). However, the overall modulated output entropy (first term in $\chi$) will increase. The reason is that for a constant arithmetic mean of $a$ and $b$ the geometric mean increases when the difference between $a$ and $b$ decreases. Therefore, by doing this we increased $\chi$ which leads to a contradiction because we the assumed optimal solution is in fact not optimal. Thus, for $\overline{\gamma}^q > \overline{\gamma}^p$ the lemma is proven.

Now we consider $\overline{\gamma}^q < \overline{\gamma}^p$ and assume that we have found an optimal solution. Then from Eq.~\eqref{eq:ineq1} it follows that
\begin{equation}
	\gamma_{\rm in}^q > \frac{1}{2} \sqrt{\frac{\gamma_{\rm env}^q}{\gamma_{\rm env}^p}}
\end{equation}
and from the condition $\gamma_{\rm env}^{q} > \gamma_{\rm env}^{p}$ we deduce easily that $\gamma_{\rm in}^q > 1/2 > \gamma_{\rm in}^p$. Taking into account that $\gamma_{\rm mod}^q > \gamma_{\rm mod}^p = 0$ we conclude that in fact $\overline{\gamma}^q > \overline{\gamma}^p$ which contradicts to our assumption. 

Thus, the lemma is proven.
\end{proof}

A direct consequence of Lemma \ref{lemmmodq} and Eq.~\eqref{eq:ineq1} is that 
\begin{equation}\label{eq:ginq}
	\frac{1}{2} \leq \gamma_{\rm in}^q < \frac{1}{2} \sqrt{\frac{\gamma_{\rm env}^q}{\gamma_{\rm env}^p}} \equiv \gamma_{\rm in (thr)}^{q},
\end{equation}
meaning that the squeezing of the input state for $\lambda < \lambda_{\rm thr}$ is always smaller than for $\lambda \ge \lambda_{\rm thr}$.

The next step is to prove that the extremum we found is a maximum. In order to do this we verify that ${\mathcal L}$ is concave for an input energy below the threshold \eqref{eq:onethr}, where $\lambda$ is fixed. First, we see again that all cross derivatives with respect to $\gamma_{\rm in}^{qp}$ and the three variances vanish on the solution. Therefore, we consider in the Hessian $H({\mathcal L})$ the second derivative with respect to $\gamma_{\rm in}^{qp}$ separately. By using Eqs.~\eqref{eq:6eqinp} and \eqref{eq:6eqmodp} we find
\begin{equation}
	 \frac{\partial^2 {\mathcal L}}{\partial (\gamma_{\rm in}^{qp})^2} = -\kappa(\overline{\nu}) - \kappa(\nu_{\rm out})\left(\frac{\gamma_{\rm out}^q}{\gamma_{\rm in}^q} - 1\right) < 0,
\end{equation} 
since $\gamma_{\rm out}^q > \gamma_{\rm in}^q$.

Again, for the remaining three variables instead of considering the Hessian of $({\mathcal L})$ we consider equivalently the Hessian of $\chi$ with constraints \eqref{eq:purity} and \eqref{eq:onelambda} embedded. Now, $\chi$ becomes a function of only one variable $\gamma_{\rm in}^q$, and thus the Hessian is only the second derivative of $\chi$ with respect to $\gamma_{\rm in}^q$. We prove in Appendix \ref{sec:mulambda} that
\begin{equation}
	\frac{\partial^2 \chi}{\partial (\gamma_{\rm in}^q)^2} = \frac{\partial F}{\partial \gamma_{\rm in}^q} < 0,
\end{equation}
where $F$ is given by \eqref{eq:F}. Therefore, the full Hessian of ${\mathcal L}$ is negative definite for input energies below the threshold as well. Thus, we have shown that ${\mathcal L}$ is concave on the solution, which proves that we indeed found a local maximum of ${\mathcal L}$.

\subsection{Bounds below the threshold}\label{sec:bounds}
In the following we deduce bounds on the overall modulated output variances for $\lambda < \lambda_{\rm thr}$. From Eqs.~\eqref{eq:ineq1} and \eqref{eq:ginq} it follows directly that 
\begin{equation}\label{eq:ineq2}
	\overline{\gamma}^p < \overline{\gamma}^q.
\end{equation}

Furthermore, we can find a lower bound on $\overline{\gamma}^p$. Suppose $\overline{\gamma}^p < 1/2$, then we have
\begin{equation}
	\begin{split}
		\frac{1}{\overline{\nu}} = \frac{1}{\sqrt{\overline{\gamma}^q\overline{\gamma}^p}} & > \frac{1}{\sqrt{\frac{1}{2}\overline{\gamma}^q}}\\
		\Rightarrow g'\left(\overline{\nu}-\frac{1}{2}\right) & > g'\left(\sqrt{\frac{1}{2} \overline{\gamma}^q}-\frac{1}{2}\right),
	\end{split}
\end{equation}
since $g'(x-1/2)$ is a monotonically decreasing function of $x$. Thus, we can rewrite Eq.~\eqref{eq:onebt},
\begin{equation}
		\frac{g'\left(\overline{\nu} - \frac{1}{2}\right)}{\overline{\nu}}\left(\overline{\gamma}^q - \overline{\gamma}^p\right) > \frac{g'\left(\sqrt{\frac{1}{2}\overline{\gamma}^q} - \frac{1}{2}\right)}{\sqrt{\frac{1}{2}\overline{\gamma}^q}}\left(\overline{\gamma}^q - \frac{1}{2}\right)
\end{equation}
and by using \eqref{eq:onebt} and the fact that below the threshold $\overline{\gamma}^q = \gamma_{\rm out}^q$ we find
\begin{equation}\label{eq:ineqgbarp}
	\frac{g'(\nu_{\rm out} - \frac{1}{2})}{2 \nu_{\rm out}} \Sigma > \frac{g'\left(\sqrt{\frac{1}{2}\gamma_{\rm out}^q} - \frac{1}{2}\right)}{\sqrt{\frac{1}{2}\gamma_{\rm out}^q}}\left(\gamma_{\rm out}^q - \frac{1}{2}\right)
\end{equation} 
where
\begin{equation}
	\Sigma = \frac{\gamma_{\rm env}^q}{4(\gamma_{\rm in}^q)^2} - \gamma_{\rm env}^p.
\end{equation}
Thus, our assumption $\overline{\gamma}^p < 1/2$ leads to an inequality which depends solely on $\gamma_{\rm in}^q, \gamma_{\rm env}^q, \gamma_{\rm env}^p$ with the constraints on $\gamma_{\rm in}^q$ given by \eqref{eq:ginq} and for the noise variances that $\gamma_{\rm env}^q > \gamma_{\rm env}^p$, $0 \leq \gamma_{\rm env}^p \leq 1/2-1/(4\gamma_{\rm in}^q)$. Since, inequality \eqref{eq:ineqgbarp} is always violated for the given constraints, we come to a contradiction which proves that 
\begin{equation}\label{eq:gbarp}
	\overline{\gamma}^p \geq 1/2.
\end{equation}

\subsection{Monotonicity of $\mu$}\label{sec:mulambda}
\begin{lemmu} 
	The Lagrange multiplier $\mu$ is a monotonically decreasing function of the input energy $\lambda$ on the solution, and moreover
	\begin{equation}\label{eq:dmudlambda}
		\frac{d \mu}{d \lambda} < 0.
	\end{equation}	
\end{lemmu}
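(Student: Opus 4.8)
The plan is to reduce the lemma to a single-variable computation and then to one clean scalar inequality for the entropy function $g$. Below the threshold the purity constraint \eqref{eq:purity}, the energy constraint \eqref{eq:onelambda}, and the boundary condition $\gamma_{\rm mod}^q=0$ leave a single free variable $\gamma_{\rm in}^q$. Eliminating $\gamma_{\rm in}^p=1/(4\gamma_{\rm in}^q)$ and $\gamma_{\rm mod}^p$ gives
\[
\overline{\gamma}^q=\gamma_{\rm in}^q+\gamma_{\rm env}^q,\qquad \overline{\gamma}^p=\lambda-\gamma_{\rm in}^q+\gamma_{\rm env}^p,
\]
so that $\overline{\gamma}^q$ depends on $\gamma_{\rm in}^q$ only while $\overline{\gamma}^p$ carries the whole $\lambda$-dependence. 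By \eqref{eq:onebtmu} the multiplier is $\mu=\kappa(\overline{\nu})\,\overline{\gamma}^q$ with $\overline{\nu}=\sqrt{\overline{\gamma}^q\overline{\gamma}^p}$ and $\kappa(x)=g'(x-\tfrac12)/(2x)$. Along the solution $\gamma_{\rm in}^q=\gamma_{\rm in}^q(\lambda)$ of \eqref{eq:onebt}, I would differentiate this expression totally in $\lambda$ and aim to exhibit $d\mu/d\lambda$ as a sum of manifestly nonpositive terms.

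Writing $\Phi(a,b)=\kappa(\sqrt{ab})\,a$ with $a=\overline{\gamma}^q$, $b=\overline{\gamma}^p$, the chain rule and $da/d\lambda=d\gamma_{\rm in}^q/d\lambda$, $db/d\lambda=1-d\gamma_{\rm in}^q/d\lambda$ give
\[
\frac{d\mu}{d\lambda}=\Phi_a\,\frac{d\gamma_{\rm in}^q}{d\lambda}+\Phi_b\left(1-\frac{d\gamma_{\rm in}^q}{d\lambda}\right).
\]
A short computation yields $\Phi_b=\tfrac{(\overline{\gamma}^q)^2}{2\overline{\nu}}\,\kappa'(\overline{\nu})$, which is negative since $\kappa$ is monotonically decreasing (the same fact already used below \eqref{eq:onebt}). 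The two coefficients $d\gamma_{\rm in}^q/d\lambda$ and $1-d\gamma_{\rm in}^q/d\lambda=d\overline{\gamma}^p/d\lambda$ are nonnegative: these are exactly the monotonicity statements $d\gamma_{\rm in}^q/d\lambda\ge0$ and $d\overline{\gamma}^p/d\lambda\ge0$, which follow by implicit differentiation of \eqref{eq:onebt} using the concavity $\partial^2_{\gamma_{\rm in}^q}\chi<0$ and are the companion results of this appendix; hence $d\gamma_{\rm in}^q/d\lambda\in[0,1]$. The whole sign question thereby collapses to showing that the remaining coefficient $\Phi_a$ is negative.

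The crux is therefore the single inequality $\Phi_a<0$. Direct differentiation gives the compact form
\[
\Phi_a=\frac14\left(\frac{g'(\overline{\nu}-\tfrac12)}{\overline{\nu}}+g''(\overline{\nu}-\tfrac12)\right),
\]
so I must prove $g'(\overline{\nu}-\tfrac12)/\overline{\nu}<-g''(\overline{\nu}-\tfrac12)$. Inserting $g'(x)=\log_2\frac{x+1}{x}$ and $g''(x)=-\frac{1}{\ln 2}\frac{1}{x(x+1)}$ at $x=\overline{\nu}-\tfrac12$, this is equivalent to $\tfrac1{\overline{\nu}}\ln\frac{\overline{\nu}+1/2}{\overline{\nu}-1/2}<\frac{1}{\overline{\nu}^2-1/4}$, and the substitution $t=1/(2\overline{\nu})\in(0,1)$ (valid since $\overline{\nu}>\tfrac12$ by $\overline{\gamma}^q>\overline{\gamma}^p\ge\tfrac12$, cf.\ \eqref{eq:ineq2} and \eqref{eq:gbarp}) turns it into $\mathrm{artanh}(t)<t/(1-t^2)$. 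This is immediate from the term-by-term comparison of $\mathrm{artanh}(t)=\sum_{k\ge0}t^{2k+1}/(2k+1)$ against $t/(1-t^2)=\sum_{k\ge0}t^{2k+1}$, strict for $t\in(0,1)$. With $\Phi_a<0$, $\Phi_b<0$, and coefficients in $[0,1]$ summing to one (so at least one is strictly positive), $d\mu/d\lambda$ is a sum of nonpositive terms with at least one strictly negative, which is \eqref{eq:dmudlambda}. I expect the main obstacle to be precisely $\Phi_a<0$: its two contributions cancel at leading order as $\overline{\nu}\to\infty$, so no crude bound suffices and the logarithmic/series structure of $g$ is genuinely needed; a secondary dependency is that the argument must be preceded by the companion monotonicities that pin $d\gamma_{\rm in}^q/d\lambda$ to $[0,1]$.
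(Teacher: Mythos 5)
Your reduction to a single variable and the decomposition
\begin{equation*}
\frac{d\mu}{d\lambda}=\Phi_a\,\frac{d\gamma_{\rm in}^q}{d\lambda}+\Phi_b\left(1-\frac{d\gamma_{\rm in}^q}{d\lambda}\right),\qquad
\Phi_a=\frac14\left(\frac{g'(\overline{\nu}-\tfrac12)}{\overline{\nu}}+g''(\overline{\nu}-\tfrac12)\right),\quad
\Phi_b=\frac{(\overline{\gamma}^q)^2}{4\overline{\nu}^2}\left(g''(\overline{\nu}-\tfrac12)-\frac{g'(\overline{\nu}-\tfrac12)}{\overline{\nu}}\right),
\end{equation*}
is correct and is, in reorganized form, exactly the paper's own expansion of $d\mu/d\lambda$ from \eqref{eq:onebtmu}. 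Your series proof of $\Phi_a<0$ via $\mathrm{artanh}(t)<t/(1-t^2)$ is also correct, and in fact establishes the inequality $-g'(\overline{\nu}-1/2)/\overline{\nu}>g''(\overline{\nu}-1/2)$ that the paper merely asserts. The genuine gap is in how you dispose of the two coefficients. You invoke $d\gamma_{\rm in}^q/d\lambda\ge 0$ and $d\overline{\gamma}^p/d\lambda\ge 0$ as ``companion results'' that ``follow by implicit differentiation of \eqref{eq:onebt} using the concavity $\partial^2_{\gamma_{\rm in}^q}\chi<0$.'' Writing $F=\partial\chi/\partial\gamma_{\rm in}^q$ as in \eqref{eq:F}, implicit differentiation gives \eqref{eq:dinqdlambda}, and concavity ($\partial F/\partial\gamma_{\rm in}^q<0$) together with $\partial F/\partial\lambda>0$ from \eqref{eq:dFdlambda} indeed yields $d\gamma_{\rm in}^q/d\lambda>0$. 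But the other coefficient, $1-d\gamma_{\rm in}^q/d\lambda=d\overline{\gamma}^p/d\lambda\ge 0$, does \emph{not} follow from concavity: it is equivalent to the quantitative statement $\left|\partial F/\partial\gamma_{\rm in}^q\right|\ge\partial F/\partial\lambda$, i.e.\ to $\partial F/\partial\lambda+\partial F/\partial\gamma_{\rm in}^q\le 0$, which is precisely the paper's inequality \eqref{eq:dFdlambdadinq}. Proving that inequality (via the positivity of $T_1$ and $T_2$) is the technical core of the paper's proof and occupies most of the appendix; your proposal assumes it away.

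There is also a circularity in the citation itself: in the paper, the concavity $\partial^2\chi/\partial(\gamma_{\rm in}^q)^2=\partial F/\partial\gamma_{\rm in}^q<0$ is not an independent prior fact but is deduced from the same chain $\partial F/\partial\gamma_{\rm in}^q<-\partial F/\partial\lambda<0$, i.e.\ from \eqref{eq:dFdlambdadinq}; likewise both monotonicities $d\gamma_{\rm in}^q/d\lambda>0$ and \eqref{eq:dgbarpdlambda} are stated in the main text as consequences of this appendix, not inputs to it. So the logical order runs opposite to the one your proposal relies on. Consequently your closing assessment inverts the difficulty: $\Phi_a<0$ (and $\Phi_b<0$) is the easy part, while the ``secondary dependency'' you defer --- pinning $d\gamma_{\rm in}^q/d\lambda$ to $[0,1]$ --- is the actual content that must be proven, and without it the argument does not close.
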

\begin{proof}
For $\lambda \geq \lambda_{\rm thr}$ the proof follows directly from the definition of $\mu$ by Eq.~\eqref{eq:oneqwfmu}, the definition of $\overline{\nu}_\mathrm{wf}$ by Eq.~\eqref{eq:nuwf} and the fact that $g'(x)$ is a monotonically decreasing function.
	
Now we prove the lemma For $\lambda < \lambda_{\rm thr}$. Using Eq.~\eqref{eq:onebtmu} we can write
\begin{equation}
	\frac{d \mu}{d \lambda} = g''\left(\overline{\nu}-\frac{1}{2}\right)\frac{d \bar{\nu}}{d \lambda}\frac{\overline{\gamma}^q}{2\overline{\nu}} + \frac{g'\left(\nu - \frac{1}{2}\right)}{2\overline{\nu}^2}\left(\frac{d\overline{\gamma}^q}{d\lambda}\overline{\nu} - \overline{\gamma}^q\frac{d\overline{\nu}}{d\lambda}\right).
\end{equation}
We can upper bound this quantity if we use the following property of $g(x)$:
\begin{equation}
	-\frac{g'(\overline{\nu}-1/2)}{\overline{\nu}} > g''(\overline{\nu}-1/2).
\end{equation}
This leads to
\begin{equation}\label{eq:dmudlambdaub}
	\frac{d \mu}{d \lambda} < -\frac{g'\left(\overline{\nu} - \frac{1}{2}\right)}{2\overline{\nu}^3}(\overline{\gamma}^q)^2 \frac{d\overline{\gamma}^p}{d\lambda}.
\end{equation} 
Since all factors in \eqref{eq:dmudlambdaub} except for $d\overline{\gamma}^p/d\lambda$ are clearly positive the lemma will be proven if we show that
\begin{equation}\label{eq:dgbarpdlambda}
	\frac{d\overline{\gamma}^p}{d\lambda} = 1 - \frac{d\gamma_{\rm in}^q}{d\lambda}> 0.
\end{equation}
This derivative can be expressed in terms of the function
\begin{equation}\label{eq:F}
	F \equiv \frac{g'(\overline{\nu} - \frac{1}{2})}{2 \overline{\nu}}\left({\overline{\gamma}}^p - {\overline{\gamma}}^q\right) - \frac{g'(\nu_{\rm out} - \frac{1}{2})}{2 \nu_{\rm out}}\left(\gamma_{\rm out}^p - \frac{\gamma_{\rm in}^p}{\gamma_{\rm in}^q}\gamma_{\rm out}^q\right).
\end{equation}
Indeed, when Eq.~\eqref{eq:onebt} holds then we have
\begin{equation}\label{eq:dinqdlambda}
	\frac{d\gamma_{\rm in}^q}{d\lambda} = - \frac{\frac{\partial F}{\partial \lambda}}{\frac{\partial F}{\partial \gamma_{\rm in}^q}}.
\end{equation}
We observe that
\begin{equation}\label{eq:dFdlambda}
	\begin{split}
	\frac{\partial F}{\partial \lambda} & = \frac{g''\left(\overline{\nu}-\frac{1}{2}\right)}{4\overline{\nu}^2}\overline{\gamma}^q(\overline{\gamma}^p-\overline{\gamma}^q)\\
	& + \frac{g'\left(\overline{\nu} - \frac{1}{2}\right)}{4\overline{\nu}}\left(1+\frac{(\overline{\gamma}^q)^2}{\overline{\nu}^2}\right) > 0,
	\end{split}
\end{equation}
because $g''(x) < 0$, $g'(x) > 0$ and $\overline{\gamma}^q > \overline{\gamma}^p$. Thus, in order to prove inequality Eq.~\eqref{eq:dgbarpdlambda} it suffices to prove that
\begin{equation}\label{eq:dFdlambdadinq}
	\frac{\partial F}{\partial \lambda} + \frac{\partial F}{\partial \gamma_{\rm in}^q} < 0.
\end{equation}
By carrying out the partial derivatives we rewrite Eq.~\eqref{eq:dFdlambdadinq} in the form 
\begin{equation}\label{eq:6terms}
	-\frac{\eta}{4 \overline{\nu}^3}\overline{\gamma}^p(\overline{\gamma}^q -\overline{\gamma}^p) T_1 - \frac{1}{4 \nu_{\rm out}^3} T_2 < 0,
\end{equation}
where $T_1$ and $T_2$ denote the expressions
\begin{equation}
	\begin{split}
		T_1 & = g''\left(\overline{\nu} - \frac{1}{2}\right)\frac{\overline{\nu}}{\eta} + g'\left(\overline{\nu}-\frac{1}{2}\right),\\
		T_2 & = g''\left(\nu_{\rm out} - \frac{1}{2}\right) \nu_{\rm out} \, \zeta\\
		 	& + g'\left(\nu_{\rm out} - \frac{1}{2}\right) \left(\frac{\gamma_{\rm env}^q}{(\gamma_{\rm in}^q)^3}\nu_{\rm out}^2 - \zeta\right),
	\end{split}
\end{equation}
and
\begin{equation}
	\eta = \frac{\overline{\gamma}^q + \overline{\gamma}^p}{\overline{\gamma}^q - \overline{\gamma}^p}, \; \zeta = \left(\gamma_{\rm env}^p - \frac{\gamma_{\rm env}^q}{4 (\gamma_{\rm in}^q)^2}\right)^2.
\end{equation}
Observe that all factors in front of $T_1$ and $T_2$ are positive, since $\overline{\gamma}^{p},\overline{\nu},\nu_{\rm out},\overline{\gamma}^q-\overline{\gamma}^p > 0$. If we prove positive of $T_1$ and $T_2$, then inequality \eqref{eq:6terms} will be proven as well as the lemma.

The positivity of $T_1$ can be verified via its partial derivatives with respect to $\overline{\gamma}^q, \overline{\gamma}^p$ which lead to:
\begin{equation}
	\frac{\partial T_1}{\partial \overline{\gamma}^q} = \{\overline{\gamma}^q + \overline{\gamma}^p [3 - 4\overline{\gamma}^p(\overline{\gamma}^p+3\overline{\gamma}^q)]\}  {T_{11}}(\overline{\gamma}^q,\overline{\gamma}^p), \label{eq:dT1dq}
\end{equation} 
\begin{equation}
	\frac{\partial T_1}{\partial \overline{\gamma}^p} = (\overline{\gamma}^q - \overline{\gamma}^p) \, {T_{12}}(\overline{\gamma}^q,\overline{\gamma}^p), \label{eq:dT1dp}
\end{equation}
where
\begin{equation}
	\begin{split}
		T_{11}(\overline{\gamma}^q,\overline{\gamma}^p) & = \frac{4\sqrt{\overline{\gamma}^q\overline{\gamma}^p}}{(\overline{\gamma}^q+\overline{\gamma}^p)^3(1-4\overline{\gamma}^q\overline{\gamma}^p)^2}\\ 
		T_{12}(\overline{\gamma}^q,\overline{\gamma}^p) & = \frac{4(\overline{\gamma}^q)^2(1+4(\overline{\gamma}^p)^2)}{\sqrt{\overline{\gamma}^q\overline{\gamma}^p}(\overline{\gamma}^q+\overline{\gamma}^p)^2(1-4\overline{\gamma}^q\overline{\gamma}^p)^2}.
	\end{split}	
\end{equation} 
Clearly the two functions ${T_{11}}(\overline{\gamma}^q,\overline{\gamma}^p)$ and ${T_{12}}(\overline{\gamma}^q,\overline{\gamma}^p)$ are positive for all $\overline{\gamma}^{q,p} > 0$. Then \eqref{eq:dT1dp} is also positive since $\overline{\gamma}^q-\overline{\gamma}^p > 0$.
From equations \eqref{eq:ineq2} and \eqref{eq:gbarp} we have that $\overline{\gamma}^p \geq 1/2$, $\overline{\gamma}^q \geq 1/2$. Then, for all these values of $\overline{\gamma}^q, \overline{\gamma}^p$ it is easy to verify that the factor in front of ${T_{11}}(\overline{\gamma}^q,\overline{\gamma}^p)$ in Eq.~\eqref{eq:dT1dq} is negative. Thus,
\begin{equation}
	\partial T_1/\partial \overline{\gamma}^q < 0, \quad \partial T_1/\partial \overline{\gamma}^p > 0
\end{equation}
and $T_1$ takes its minimal value at the boundary of the allowed region for $\overline{\gamma}^q, \overline{\gamma}^p$, namely, at the point where $\overline{\gamma}^q$ is maximal and $\overline{\gamma}^p$ is minimal. Observe that for ${\mathcal N_2}$ using Eqs.~\eqref{eq:ginq}, \eqref{eq:ineq2} and \eqref{eq:gbarp} we have
\begin{equation}
	\frac{1}{2} < \overline{\gamma}^p < \overline{\gamma}^q < \frac{1}{2}\sqrt{\frac{\gamma_{\rm env}^q}{\gamma_{\rm env}^p}} + \gamma_{\rm env}^q.
\end{equation}
Then $T_1$ takes its minimal value for $\overline{\gamma}^p_{\rm min} = 1/2$ and $\overline{\gamma}^q_{\rm max} = \frac{1}{2}\sqrt{\gamma_{\rm env}^q/\gamma_{\rm env}^p} + \gamma_{\rm env}^q$.
Therefore, if $T_1$ at this point is positive for all values of $\gamma_{\rm env}^q, \gamma_{\rm env}^p$  then it is positive in the whole allowed region of $\overline{\gamma}^q$ and $\overline{\gamma}^p$. 

In order to evaluate $T_1$ at this point we derive it with respect to $\gamma_{\rm env}^q, \gamma_{\rm env}^p$ and we find that 
\begin{equation}
	\begin{split}
		\frac{\partial}{\partial \gamma_{\rm env}^q} \left. T_1 \right|_{\overline{\gamma}^q_{\rm max},\overline{\gamma}^p_{\rm min}} & < 0,\\
		\frac{\partial}{\partial \gamma_{\rm env}^p} \left. T_1 \right|_{\overline{\gamma}^q_{\rm max},\overline{\gamma}^p_{\rm min}} & > 0.
	\end{split}
\end{equation}
Then, again $T_1$ takes its minimal value at the point where $\gamma_{\rm env}^q$ is maximal and $\gamma_{\rm env}^p$ is minimal. At this limit point we find
 \[
	T_1|_{\gamma_{\rm env}^q \rightarrow \infty, \gamma_{\rm env}^p \rightarrow 0} \rightarrow 0,
\] 
where the limit is reached from above. This proves that $T_1 > 0$.

Now we show that $T_2 > 0$ as well. We rewrite $T_2$ as
\begin{equation}\label{eq:T2xnu}
	T_2 = \xi \, \left[(z^2-1)^2 \beta(\nu_{\rm out}) + \frac{8\nu_{\rm out}^2 z}{\nu_{\rm env}}\right]
\end{equation}
where
\begin{equation}
	\begin{split}
		z & = \frac{\gamma_{\rm in}^q}{\gamma_{\rm in (thr)}^{q}}, \quad \xi = \frac{g'\left(\nu_{\rm out} -\frac{1}{2}\right) (\gamma^p_{\rm env})^2}{z^4},\\
		\beta(\nu_{\rm out}) & = \frac{g''\left(\nu_{\rm out} - \frac{1}{2}\right)}{g'\left(\nu_{\rm out} - \frac{1}{2}\right)}\nu_{\rm out} -1,
	\end{split}		
\end{equation}
where $\gamma_{\rm in (thr)}^{q}$ was defined in Eq.~\eqref{eq:ginq} and therefore\\ $0 \leq z < 1$. Using these notations we express
\begin{equation}
	\nu_{\rm out} = \sqrt{1/4 - \nu_{\rm env}/2(z+1/z)+\nu_{\rm env}^2}
\end{equation}
and since $\xi > 0$ we can rewrite the desired inequality $T_2 > 0$ in the equivalent form
\begin{equation}\label{eq:hineq}
	h(z,\nu_{\rm out}) > -\frac{\beta(\nu_{\rm out})}{\nu_{\rm out}^2},
\end{equation}
where 
\begin{equation}
	h(z,\nu_{\rm out}) = \frac{32z^2}{(\sqrt{(1-z^2)^2 + 16z^2\nu_{\rm out}^2} - 1 - z^2)(1-z^2)^2}.
\end{equation}
We observe that
\begin{equation}
	\lim_{z \to 0}h(z,\nu_{\rm out}) = \frac{16}{4\nu_{\rm out}^2 - 1}.
\end{equation}
It is easy to check that the inequality
\begin{equation}
	\frac{16}{4\nu_{\rm out}^2 - 1} > -\frac{\beta(\nu_{\rm out})}{\nu_{\rm out}^2}, 
\end{equation}
holds $\forall \nu_{\rm out} \geq 1/2$. Thus, if $\partial h(z,\nu_{\rm out})/\partial z > 0$ holds for all $\nu_{\rm out}$ and $z$ in the allowed region then the desired inequality \eqref{eq:hineq} holds. We find that
\begin{equation}
	\frac{\partial h(z,\nu_{\rm out})}{\partial z} = 64z \frac{a(z,\nu_{\rm out})}{b(z,\nu_{\rm out})},
\end{equation}
where 
\begin{equation}\label{eq:a}
	\begin{split}
		a(z,\nu_{\rm out}) & = -1 - 8z^2\nu_{\rm out}^2 - 3z^4(8\nu_{\rm out}^2-1) - 2z^6\\
						   & + l(z,\nu_{\rm out})(1 + z^2 + 2z^4),\\
	    b(z,\nu_{\rm out}) & = (z^2-1)^3l(z,\nu_{\rm out})(1+z^2-l(z,\nu_{\rm out}))^2,\\
        l(z,\nu_{\rm out}) & = \sqrt{1+z^4+2z^2(8\nu_{\rm out}^2-1)}.							
	\end{split}	
\end{equation}
Clearly $b(z,\nu_{\rm out})$ is negative and therefore, if $a(z,\nu_{\rm out})$ is negative as well then $\partial h(z,\nu_{\rm out})/\partial z > 0$. Since the first line in $a(z,\nu_{\rm out})$ in Eq.~\eqref{eq:a} is negative in the allowed region of $\nu_{\rm out}$ and $z$, and the second line is positive we can make a comparison of squares of the first and second line, which confirms that indeed $a(z,\nu_{\rm out}) < 0$. Thus, $T_2 > 0$, as well as $T_1 > 0$ which proves \eqref{eq:dFdlambdadinq} which means that \eqref{eq:dgbarpdlambda} holds and thus, the Lemma is proven. \qedhere
\end{proof}

Additionally by combining Eqs.~\eqref{eq:dinqdlambda}, \eqref{eq:dFdlambda} and \eqref{eq:dFdlambdadinq} we conclude that
\begin{equation}
	\frac{d\gamma_{\rm in}^q}{d\lambda} > 0.
\end{equation}
This means that the antisqueezing in the more noisy quadrature is always increasing until the squeezing value at $\lambda = \lambda_{\rm thr}$ is reached.

\subsection{Concavity of the Holevo $\chi$-quantity in $\lambda$}\label{sec:concavity}
\begin{lemholevo} 
	The Holevo $\chi$-quantity given by Eq.~\eqref{eq:chi} is a concave function of $\lambda$, on the solution of the optimization problem.
\end{lemholevo}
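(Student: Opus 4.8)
The plan is to reduce the statement entirely to the monotonicity of the Lagrange multiplier $\mu$ that was just established, namely $d\mu/d\lambda < 0$ [Eq.~\eqref{eq:dmudlambda}]. The key observation is that, along the solution, the optimal value of $\chi$ depends on $\lambda$ only through the energy constraint, so that by the envelope theorem its derivative coincides with the associated multiplier, $d\chi/d\lambda = \mu$. Differentiating once more would then give $d^2\chi/d\lambda^2 = d\mu/d\lambda < 0$, and the concavity follows at once.

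First I would establish the identity $d\chi/d\lambda = \mu$. Writing the optimal value as $\chi(\lambda) = {\mathcal L}$ evaluated on the solution, with ${\mathcal L}$ the Lagrangian \eqref{eq:Lone}, the total derivative $d\chi/d\lambda$ decomposes into the sum of the partial derivatives of ${\mathcal L}$ with respect to the optimization variables (and multipliers) times their $\lambda$-derivatives, plus the explicit partial derivative $\partial{\mathcal L}/\partial\lambda$. The former terms vanish because the solution is a stationary point of ${\mathcal L}$ and the purity and energy constraints are satisfied, while the latter equals $\mu$, since $\lambda$ enters \eqref{eq:Lone} only through the single term $+\mu\lambda$. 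The smoothness of the solution in $\lambda$ required for this step is guaranteed by the implicit function theorem applied to Eq.~\eqref{eq:onebt} below the threshold and to Eqs.~\eqref{eq:oneqwf} and \eqref{eq:oneoptin} above it, whose non-degeneracy is in turn ensured by the negative-definite Hessian already proven in Appendix \ref{sec:lagrange}.

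The same identity can also be verified directly in each regime, which I would include as a cross-check. Above the threshold $\nu_{\rm out}=\sqrt{\gamma_{\rm env}^q\gamma_{\rm env}^p}+1/2$ is independent of $\lambda$ while $\overline{\nu}_{\rm wf}=\overline{\lambda}/2$ is linear in $\lambda$ [Eq.~\eqref{eq:nuwf}], so that $d\chi/d\lambda=\tfrac{1}{2}g'(\overline{\nu}_{\rm wf}-1/2)=\mu$ by Eq.~\eqref{eq:oneqwfmu}. Below the threshold, the stationarity relations \eqref{eq:onebt} and \eqref{eq:onebtmu} reproduce $d\chi/d\lambda=\mu$ once the purity and energy constraints are inserted. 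Since $\mu$ is continuous across $\lambda=\lambda_{\rm thr}$ (it is a single monotonic function of $\lambda$ joining the two regimes), $\chi$ is continuously differentiable on the whole energy domain $\lambda\ge 1$, and its derivative $\mu$ is monotonically decreasing, hence $\chi$ is concave globally.

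I expect the genuine analytic obstacle to have been absorbed entirely into the preceding lemma: all the difficulty resides in proving $d\mu/d\lambda<0$, i.e.\ the positivity of $T_1$ and $T_2$. Here the only remaining care is the justification of the envelope identity and the matching of the two regimes at $\lambda_{\rm thr}$, both of which are routine given the non-degeneracy of the Hessian and the continuity of $\mu$. Thus, combining $d\chi/d\lambda=\mu$ with Eq.~\eqref{eq:dmudlambda} yields $d^2\chi/d\lambda^2=d\mu/d\lambda<0$, proving that $\chi$ is a concave function of $\lambda$ on the solution.
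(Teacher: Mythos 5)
Your proposal is correct and takes essentially the same route as the paper: the paper's own proof below threshold observes that on the solution $\partial\chi/\partial\lambda=\mu$ and $\partial\chi/\partial\gamma_{\rm in}^q=F=0$ [Eq.~\eqref{eq:F} with Eq.~\eqref{eq:onebt}], so that $d^2\chi/d\lambda^2=d\mu/d\lambda<0$ by the monotonicity lemma of Appendix~\ref{sec:mulambda} --- exactly your envelope-theorem reduction. The only cosmetic difference is that above threshold the paper computes directly $d^2\chi/d\lambda^2=g''\left(\overline{\nu}_{\rm wf}-\tfrac{1}{2}\right)/4<0$ (which coincides with $d\mu/d\lambda$ there), whereas you apply the envelope identity uniformly in both regimes and make the $C^1$-matching of the two branches at $\lambda_{\rm thr}$ explicit.
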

\begin{proof}
For $\lambda \geq \lambda_{\rm thr}$ we find that $\gamma_{\rm in}^q$ is given by \eqref{eq:oneoptin} and independent of $\lambda$. Therefore, we conclude from Eq.~\eqref{eq:nuwf} that $\chi$ is on the solution a function of only one variable $\lambda$. Then, at the extremum of ${\mathcal L}$ the second partial derivative of $\chi$ with respect to $\lambda$ is equal to the total second derivative, which reads
\begin{equation}
	\frac{d^2 \chi}{d \lambda^2} = \frac{\partial^2 \chi}{\partial \lambda^2} = \frac{g''(\overline{\nu}_{\rm wf}-\frac{1}{2})}{4} < 0.
\end{equation}
Thus, we have shown that above the threshold $\chi$ is a concave function of $\lambda$.

For an input energy $\lambda$ below the threshold, $\gamma_{\rm in}^q$ depends on $\lambda$ via the implicit function given by Eq.~\eqref{eq:onebt}. Therefore, the total second derivative of $\chi$ with respect to $\lambda$ has to take into account this dependence. Now this reads
\begin{equation}\label{eq:d2chidlambda2}
	\begin{split}
		\frac{d^2 \chi}{d \lambda^2} = & \frac{\partial^2 \chi}{\partial \lambda^2} + \frac{\partial^2 \chi}{\partial \lambda \partial \gamma_{\rm in}^q} \frac{d \gamma_{\rm in}^q}{d \lambda} + \frac{\partial \chi}{\partial \gamma_{\rm in}^q} \frac{d^2 \gamma_{\rm in}^q}{d \lambda^2}\\
		& + \left(\frac{\partial^2 \chi}{\partial \lambda \partial \gamma_{\rm in}^q} + \frac{\partial^2 \chi}{\partial (\gamma_{\rm in}^q)^2} \frac{d \gamma_{\rm in}^q}{d \lambda}\right)\frac{d \gamma_{\rm in}^q}{d \lambda}.
	\end{split} 
\end{equation}
One can easily show using Eq.~\eqref{eq:6eqmodp} that $\partial \chi/\partial \lambda = \mu$ and by Eq.~\eqref{eq:F} it follows that $\partial \chi/\partial \gamma_{\rm in}^q = F$. Thus, Eq.~\eqref{eq:d2chidlambda2} simplifies on the solution to 
\begin{equation}
	\frac{d^2 \chi}{d \lambda^2} = \frac{d \mu}{d \lambda} < 0,
\end{equation}
as proven in Appendix \ref{sec:mulambda}, which proves the lemma. \qedhere
\end{proof}
\section{Eigenvectors and Eigenvalues of Toeplitz matrices}\label{sec:toep}
In the following we derive the optimal input covariance matrix in the case of global water-filling.

All Toeplitz matrices that belong to the Wiener class commute asymptotically, because in this limit they commute with circulant matrices which all commute between each other (using \cite{G06} and \cite{F96}). A circulant matrix $A$ with dimension $n \times n$ is defined as	$A_{ij} = a_{i-j \; {\rm mod} \; n}$.
Therefore, we can introduce the notation $k = (i-j) \; {\rm mod} \; n$ which indicates the $k$th diagonal of $A$. From \cite{G06} we state, that the eigenvalues of $A$ are $\psi_{m} = \sum_{k = 0}^{n-1}{a_k \, e^{-i 2 \pi m k/n}}, \quad m = 1,2,...n$. If we take the limit $n \rightarrow \infty$ the latter becomes the valid solution for the eigenvalue spectrum of all Toeplitz matrices (that belong to the Wiener class). As argued in Sec. \ref{sec:multimode}, the optimal input covariance matrix $\gamma_{\rm in}$ is diagonalized in the same basis as the noise covariance matrix $\gamma_{\rm env}$. Thus, $\gamma_{\rm in}$ is asymptotically Toeplitz with quadrature spectra
\[
	\gamma_{\rm in}^{q,p}(x) = \sum_{k = 0}^\infty{\gamma^{q,p}_{{\rm in},k}\, e^{-i k x}}, \quad x \in [0,2 \pi],
\]
where $\gamma^{q,p}_{{\rm in},k}$ are the $k$th diagonal of $\gamma^{q,p}_{\rm in}$ (and the Fourier coefficient of a Fourier series) in the original basis. Since $\gamma^{q,p}_{\rm in}(x)$ is Riemann integrable we conclude that
\begin{equation}\label{eq:gink}
	\gamma^{q,p}_{{\rm in},k} = \frac{1}{2 \pi}{\int\limits_0^{2\pi}{d x \, e^{i k x}} \, \gamma_{\rm in}^{q,p}(x)}, \quad k = 0,1,2,...,\infty,
\end{equation}
which provides the covariance matrix of the input state in the case of a global water-filling.

\section{Gauss-Markov process of order ${\mathcal P}$}\label{sec:arp}
Here we state the extension of the Gauss-Markov process to a Markov process of order ${\mathcal P}$ which is also called autoregressive (AR) process with white Gaussian noise. The underlying stochastic process is defined as \cite{SZ99}
\begin{equation}\label{eq:arp}
	Z_t = \sum\limits_{k=1}^{\mathcal P}{\phi_k \, Z_{t-k}} + W_t, \quad t=1,...,n
\end{equation}
where $\phi_1,\phi_2,...,\phi_{\mathcal P}$ are the correlation parameters and $W_t$ are identically and independently Gaussian distributed random variables. This process is stationary (shift invariant) iff all roots of the characteristic polynomial
\begin{equation}\label{eq:charpol}
	 p(y) = 1 - \sum\limits_{k=1}^{\mathcal P}{\phi_k \, y^k}
\end{equation} 
lie outside the unit circle $|y|=1$. If the process is stationary, then the covariance matrix of the stochastic process \eqref{eq:arp} is Toeplitz. Its spectrum is given by \cite{SZ99}
\begin{equation}\label{eq:evarp}
	\gamma_{AR}(x) = \frac{\mathrm{Var}(Z_t)}{|1 - \sum_{k=1}^{\mathcal P}{\phi_k \, e^{ikx}}|^2},
\end{equation} 
where $\mathrm{Var}(Z_t)$ is the variance of $Z_t$. Thus, if the noise correlations in both quadrature blocks in \eqref{eq:envcov} are given by an AR process of order ${\mathcal P}$ one immediately can compute the capacity using \eqref{eq:evarp} and the algorithm presented in Sec.~\ref{sec:solarb}.

\end{document}